\theoremstyle{thmstyleone}%
\newtheorem{theorem}{Theorem}
\newtheorem{proposition}[theorem]{Proposition}%
\theoremstyle{thmstyletwo}%
\newtheorem{remark}{Remark}%
\theoremstyle{thmstylethree}%
\newtheorem{definition}{Definition}%
\newtheorem{lemma}{Lemma}
\newtheorem{corollary}{Corollary}
\renewcommand{\p}{\partial}
\newcounter{mnotecount}[section]
\newcommand{\mnotex}[1]
{\protect{\stepcounter{mnotecount}}$^{\mbox{\footnotesize $\bullet$\themnotecount}}$
\marginpar{
\raggedright\tiny\em
$\!\!\!\!\!\!\,\bullet$\themnotecount: #1} }
\begin{document}


\title{Results on Lorentzian metric spaces}


\author{\fnm{} \sur{E. Minguzzi}}

\affil{\small \orgdiv{Dipartimento di Matematica}, \orgname{Universit\`a
	degli Studi di Pisa}, \orgaddress{\street{Largo
B. Pontecorvo 5,  I-56127}, \city{Pisa},  \country{Italy},
ettore.minguzzi@unipi.it}}


\abstract{
We provide a short introduction to ``Lorentzian metric spaces" i.e., spacetimes defined solely in terms of  the two-point Lorentzian distance. As noted in previous work, this structure is essentially unique if minimal conditions are imposed, such as the continuity of the Lorentzian distance and the relative compactness of chronological diamonds. The latter condition is natural for interpreting these spaces as low-regularity versions of globally hyperbolic spacetimes. Confirming this interpretation, we prove that every Lorentzian metric space admits a Cauchy time function. The proof is  constructive for this general setting and it provides a novel argument that is interesting already for smooth spacetimes.


}



\maketitle







\section{Introduction}
The present work is an introduction to the theory of {\em Lorentzian metric spaces} as developed in previous work in collaboration with S. Suhr \cite{minguzzi22} (bounded case) and A. Bykov and S. Suhr \cite{minguzzi24b} (unbounded case).

As an original contribution, we present a new result on the existence of Cauchy time functions for these spaces, thus confirming their interpretation as low regularity versions of globally hyperbolic spacetimes. This proof provides an affirmative answer to a question posed by L. Garc{\'\i}a-Heveling at the BIRS-IMAG  meeting in Granada.\footnote{``Geometry, Analysis, and Physics in Lorentzian Signature", May 4 - 9, 2025}

The traditional approach to proving this type of result uses Geroch's volume function, which typically involves subtle issues connected to showing that the boundaries $\partial I^\pm(p)$, for every spacetime point $p\in M$, have vanishing measure \cite{geroch70,hawking73,dieckmann88}.  Burtscher and García-Heveling recently followed a volume-type approach in a low regularity Lorentzian length space setting à la Kunzinger-S\"amann \cite{kunzinger18,burtscher25}.

Interestingly, our proof is extremely simple and appears to be new already in the smooth case. It is simpler than Geroch's volume function proof \cite{geroch70,hawking73} as it requires no spacetime measure. Instead, all Lorentzian metric space axioms enter the proof neatly.

This line of research is motivated by the long-standing interplay between physics and geometry, which suggests the need for a theory of Lorentzian geometry that can incorporate discreteness and non-regularity. From quantum mechanics to the quest for a theory of quantum gravity, the fundamental description of spacetime at high energies may require a departure from the smooth manifold paradigm. Concurrently, developments in metric geometry, such as the theory of Alexandrov spaces and Ricci limit spaces based on Gromov-Hausdorff convergence, have shown the power of synthetic and discrete approaches in the positive signature setting.

A natural question is whether these mathematical advances can be translated to the Lorentzian setting, which is inherent to General Relativity. Unfortunately, this translation is far from straightforward. Fundamental difficulties arise: Lorentzian balls are non-compact, and the Lorentzian distance $d$ \cite{beem96,minguzzi18b} satisfies a \emph{reverse triangle inequality} ($d(p,r) \geq d(p,q) + d(q,r)$ for $p \ll q \ll r$) and hence has a non-trivial relationship with the topology of the space.

Synthetic, namely coordinate free, approaches to relativity are actually as old as the theory itself. The very first papers by Einstein on Special Relativity were synthetic in approach. They used gedanken experiments involving reflection of light beams over mirrors, employing the typical argumentative structure of Euclidean geometry with its points and lines. Not surprisingly, axiomatizations of relativity based on synthetic concepts were developed early on \cite{robb14}.

Later work, such as that by Busemann \cite{busemann67}, or the more recent work by Kunzinger-S\"amann \cite{kunzinger18}, laid important groundwork, reframing this type of investigation in a more modern mathematical language. However, these works rely on auxiliary structures, such as a pre-existing topology \cite{busemann67} or an auxiliary  metric \cite{kunzinger18}, which lack a clear Lorentzian justification. Our objective in \cite{minguzzi22} was to develop a purely metric approach based \emph{only} on the Lorentzian distance function $d$. Our guiding principle was the identification of definitions that are stable under a suitable notion of Gromov-Hausdorff (GH) convergence, ensuring that the resulting theory is natural and robust.

In our first paper \cite{minguzzi22}, we introduced \emph{Bounded Lorentzian Metric Spaces} (BLMS), where the entire space satisfies a mild type of compactness condition which implies a bounded timelike diameter. In the present paper we include a discussion of the subsequent development in the \emph{unbounded} case as developed in \cite{minguzzi24b}. We recall the definition and fundamental properties of Lorentzian metric spaces, the notion of GH-convergence for them, and emphasize some central properties like the fact that (pre)length spaces are  stable under GH-limits.


As mentioned, our work was guided by the study of GH-convergence for spacetime, a concept for which some approaches already existed. Preceding our work were those of Noldus \cite{noldus04}, Bombelli-Noldus \cite{bombelli04}, Sormani-Vega \cite{sormani16}, and Allen-Burtscher\cite{allen19}, with Müller's approach \cite{muller22b} being developed concurrently. Subsequent contributions include those by Cavalletti-Mondino \cite{cavalletti20}, Mondino-S{\"a}mann \cite{mondino25}, and Che-Perales-Sormani  \cite{che25}.

Our approach had the advantage of establishing a direct connection with causets, i.e.\ finite weighted oriented graphs at the foundation of  Causal Set Theory \cite{surya19}. In our framework, causets are just finite BLMS, and interestingly the family of causets is GH-dense in the family of the general BLMS.\footnote{ The recent work \cite{braun25} on the spacetime reconstruction problem has deepened the link of BLMS to Causal Set Theory.} It is precisely this property that allowed us to obtain an analog of Gromov's precompactness theorem for BLMS.

Still, we stress that our most significant result was not in the direction of the precompactness theorem,
rather it was  in  the identification of a minimal and particularly convenient notion of Lorentzian metric/length space, where the  Length Space property is also stable under GH-convergence. The merit of our work was  to show that a simple geometry of Lorentzian metric spaces based solely on the Lorentzian distance $d$ is possible, essentially solving the difficulties in connecting $d$ with other apparently primitive concepts such as topology and causal structure. This issue was left open by previous approaches  as soon as they included in their framework
auxiliary concepts that do not seem to have observational evidence (note that working with the Lorentzian distance is reasonable as it corresponds to the Lorentzian metric $g$  which is a field concept).

\section{Definition and first properties of  LMS}

Given a function $d: X \times X \to [0, \infty)$ let
\begin{align*}
I&=\{(x,y)\in X^2: d(x,y)>0\},\\
 I_\epsilon &= \{(x,y) \in X^2 : d(x,y) \geq \epsilon\}
\end{align*}
  and let $I(x,y) = I^+(x) \cap I^-(y)$ where $I^+(x)=\{y: (x,y)\in I \}$,  $I^-(y)=\{x: (x,y)\in I \}$.
The core object of our study is defined as follows.

\begin{definition}[Lorentzian Metric Space (LMS)] \label{defl}
A \emph{Lorentzian metric space} $(X, d)$ is a set $X$ endowed with a function $d: X \times X \to [0, \infty)$, called the \emph{Lorentzian distance}, satisfying:
\begin{enumerate}
    \item[(i)] Reverse Triangle Inequality: For every $x, y, z \in X$ with $d(x, y) > 0$ and $d(y, z) > 0$, we have $d(x, z) \geq d(x, y) + d(y, z)$.
    \item[(ii)] Topology and Compactness: There exists a topology $\mathcal{T}$ on $X$ such that $d$ is continuous, and for every $x, y \in X$ and every $\epsilon > 0$, the set $I_\epsilon \cap (\overline{I(x,y)} \times \overline{I(x,y)})$ is compact.
    \item[(iii)] (weak) Distinction: $d$ distinguishes points: for every pair $x \neq y$, there exists $z$ such that $d(x,z) \neq d(y,z)$ or $d(z,x) \neq d(z,y)$.
\end{enumerate}
\end{definition}

The relation $I$ is the  chronological relation, also denoted $\ll$, while $I(x,y)$ is the chronological diamond. The chronological future of a set $S$ is as usual denoted with $I^+(S)$, while the past is denoted with $I^-(S)$. The chronologically convex hull is $I(S):=I^+(S)\cap I^-(S)$.

 A  \emph{Bounded LMS} (BLMS) is a LMS for which  the sets $I_\epsilon$ themselves are compact for all $\epsilon>0$; this implies $d$ is bounded.

The future chronological  boundary $X^+$ consists of all points with an empty chronological future. Similarly,  $X^-$ consists of all points with an empty chronological past. The condition of both boundaries being empty reads $I(X)=X$.

The spacelike boundary is $i^0=X^+\cap X^-$ and consists of at most a single point (by (iii)). In the BLMS case it is always possible to adjoin $i^0$ (or remove it) this process being essentially the one-point compactification.

This spacetime definition is minimal and natural. Smooth globally hyperbolic spacetimes \cite{hawking73} and causal sets (causets) \cite{surya19} are examples of LMS.

Condition (ii) balances the topology of the spacetime. Specifically, a finer topology (more open sets) makes it easier for
$d$ to be continuous, but harder to satisfy the compactness condition.  It is essentially a formulation of global hyperbolicity, as becomes apparent for spacetimes without a chronological boundary, as we have

\begin{theorem}\label{cg-lms}
   A Lorentzian metric space $(X,d)$ such that $I(X)=X$ is characterized by  the following properties:
    \begin{enumerate}
        \item[(i)] $d$ satisfies the reverse triangle inequality;
        \item[(ii)] There is a topology $T$ on $X$ such that $d$ is continuous in the product topology and, for every $x,y\in X$, $\overline{I(x,y)}$ is compact;
        \item[(iii)] $d$ distinguishes points;
        \item[(iv)] $I(X)=X$.
    \end{enumerate}
\end{theorem}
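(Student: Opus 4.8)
The plan is to observe that the list (i)--(iv) differs from Definition~\ref{defl} supplemented with $I(X)=X$ only in the compactness clause of item (ii): the definition requires $I_\epsilon\cap(\overline{I(x,y)}\times\overline{I(x,y)})$ to be compact for every $\epsilon>0$, whereas here one demands directly that $\overline{I(x,y)}$ be compact. Since items (i), (iii), (iv) coincide verbatim with the reverse triangle inequality, distinction, and $I(X)=X$, the entire content is the equivalence of these two compactness conditions in the presence of continuity of $d$, distinction, and $I(X)=X$. I would therefore prove the two implications separately.

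For the implication ``$\overline{I(x,y)}$ compact $\Rightarrow$ the definition's clause'', which needs neither distinction nor $I(X)=X$: continuity of $d$ makes $I_\epsilon=d^{-1}([\epsilon,\infty))$ closed, and $\overline{I(x,y)}\times\overline{I(x,y)}$ is compact as a finite product of compacts; hence $I_\epsilon\cap(\overline{I(x,y)}\times\overline{I(x,y)})$ is a closed subset of a compact set, thus compact. This is immediate.

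The substance is the converse: from the definition's clause, together with $I(X)=X$, deduce that $\overline{I(x,y)}$ is compact. First I would record that $X$ is Hausdorff: given $x\neq y$, distinction yields $z$ with, say, $d(x,z)\neq d(y,z)$, and the continuous map $p\mapsto d(p,z)$ separates $x$ and $y$ by pulling back disjoint neighbourhoods of $\mathbb{R}$. Next, since $I(X)=X$ forces every point to have nonempty past and future, I may select buffer points $x'\ll x$ and $y\ll y'\ll y''$. Put $\epsilon_2=d(y,y')>0$. For any $p\in I(x,y)$ the reverse triangle inequality gives $d(p,y')\geq d(p,y)+d(y,y')\geq \epsilon_2$, while transitivity of $\ll$ (a consequence of the reverse triangle inequality) yields $x'\ll p\ll y''$ and $x'\ll y'\ll y''$, so $p$ and $y'$ both lie in the single diamond $I(x',y'')$. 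Hence $(p,y')$ belongs to the compact block $K:=I_{\epsilon_2}\cap(\overline{I(x',y'')}\times\overline{I(x',y'')})$, and applying the continuous first projection $\pi_1$ gives $I(x,y)\subseteq \pi_1(K)$. As $\pi_1(K)$ is compact and $X$ is Hausdorff, $\pi_1(K)$ is closed, so $\overline{I(x,y)}\subseteq\pi_1(K)$; being a closed subset of a compact set, $\overline{I(x,y)}$ is compact.

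I expect the converse to be the only real obstacle, and within it the decisive idea is to enlarge the diamond by buffer points and then project: the hypothesized compactness lives in $X\times X$ and carries no information unless one secures a \emph{uniform} lower bound $\epsilon_2$ on the timelike separation of the pairs $(p,y')$, which is exactly what the reverse triangle inequality together with the availability of future points (i.e.\ $I(X)=X$) supply. The two points to verify carefully are that $p$ and $y'$ genuinely share one diamond $I(x',y'')$ so that the pair lands in a single compact block $K$, and that Hausdorffness is in force so as to pass from $I(x,y)\subseteq\pi_1(K)$ to $\overline{I(x,y)}\subseteq\pi_1(K)$. This last point is precisely where $I(X)=X$ is indispensable: without past/future points the enlargement fails and the simpler condition need no longer imply relative compactness of the diamonds.
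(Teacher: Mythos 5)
Your proof is correct, and both directions are handled properly: the passage from compact $\overline{I(x,y)}$ to the definition's clause is indeed just ``closed subset of a compact set,'' and the substantive converse is settled by your buffer-point construction ($y \ll y' \ll y''$ supplied by $I(X)=X$, the uniform bound $d(p,y')\geq d(y,y')=\epsilon_2$ from the reverse triangle inequality, membership of the pair $(p,y')$ in the compact block $I_{\epsilon_2}\cap(\overline{I(x',y'')}\times\overline{I(x',y'')})$, and projection, with Hausdorffness from distinction closing the argument). Note that the paper itself states Theorem~\ref{cg-lms} without proof, deferring to \cite{minguzzi24b}, so there is no in-paper argument to compare against; your route is the natural one for this equivalence and uses exactly the ingredients the surrounding text identifies (continuity of $d$, compactness of the blocks, distinction, and $I(X)=X$). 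Two minor remarks: the empty-diamond case should be dispatched explicitly (if $I(x,y)=\varnothing$ its closure is trivially compact, and your chain $x'\ll x\ll p\ll y\ll y'$ presupposes some $p$ exists); and in your final paragraph you attribute Hausdorffness to $I(X)=X$, whereas it comes from axiom (iii) plus continuity of $d$ --- $I(X)=X$ enters only through the existence of the buffer points. Neither affects the validity of the argument. Incidentally, the past buffer point $x'$ is dispensable: the diamond $I(x,y'')$ already contains both $p$ and $y'$, so one past enlargement can be saved.
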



This result clarifies that, as we noted in our first work \cite{minguzzi22}, the definition of an LMS possesses a notable uniqueness and naturalness. 
Alternative abstract formulation of Lorentzian geometry will be closely related to LMS unless
\begin{itemize}
\item[(a)] they use ingredients different from $d$ alone (an auxiliary metric, an auxiliary closed order, an auxiliary topology, etc.),
\item[(b)] the assumptions imposed do not reproduce global hyperbolicity as they do not imply good properties for $d$.
\end{itemize}
Considering a two-point function with value in $\{-\infty\}\cup \mathbb{R}$ rather than $\mathbb{R}$ does not change this conclusion as discussed at length in \cite{minguzzi24b} and briefly recalled in what follows.

In the remainder of this introduction we shall always assume $I(X)=X$, as otherwise some results become a bit more technical. This excludes finite causets which can
be better treated as BLMS anyway.


As we shall see, another crucial concept is that of a \emph{generating set} $\mathcal{G} \subset X$, for which $X = I(\mathcal{G})$. If a countable generating set exists, the LMS is called \emph{countably generated}. A \emph{sequenced LMS} $(X, d, (p^k)_{k\in\mathbb{N}})$ is an LMS with a distinguished generating sequence.

\begin{remark}
Condition (iii) is not so restrictive. If (i) and (ii) are satisfied it is often possible, e.g.\ for bounded Lorentzian metric spaces, to accomplish (iii) passing to a {\em distance quotient} which we introduced as follows: $x\sim y$ if they are not distinguished by $d$. The Lorentzian distance then passes to the quotient (see also Thm.\ \ref{tjg} and relative discussion).
\end{remark}

\section{The canonical topology}
\label{sec:topology}

A priori, Definition \ref{defl} requires the existence of \emph{some} topology. A central result is that for an LMS with no chronological boundary, this topology is, in fact, unique and can be constructed directly from $d$. The same result holds for a BLMS, at least for the topology induced on the complement of $i^0$. When a BLMS contains $i^0$ the canonical topology is the coarsest  one among those that satisfy (ii) (it can be shown to exist), which fixes also the neighborhood system of $i^0$.

The proof is based on the following topological result \cite{minguzzi22} which we could not find in the literature

\begin{lemma}\label{lem:subbasis}
    Let $X$ be a topological space, and let $\mathscr{A}$ be a family of open subsets of $X$ such that
     \begin{itemize}
     \item[(a)] $\mathscr{A}$-Hausdorffness: for any $x,y\in X$, $x\ne y$, there are $A,B\in \mathscr{A}$ satisfying $x\in A$, $y\in B$, $A\cap B=\varnothing$,
     \item[(b)] $\mathscr{A}$-local compactness: for any $x$ there is $U\in \mathscr{A}$ such that $x\in U\subset C$, where $C$ is compact.
     \end{itemize}
     Then $\mathscr{A}$ is a subbasis for the  topology.
\end{lemma}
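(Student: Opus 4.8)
The plan is to show that the topology $\mathcal{T}'$ generated by $\mathscr{A}$ (i.e., the topology whose basis consists of finite intersections of members of $\mathscr{A}$) coincides with the ambient topology $\mathcal{T}$ of $X$. One inclusion is immediate: since every $A \in \mathscr{A}$ is $\mathcal{T}$-open and $\mathcal{T}'$ is by definition the coarsest topology containing $\mathscr{A}$, we have $\mathcal{T}' \subseteq \mathcal{T}$. All the work goes into the reverse inclusion $\mathcal{T} \subseteq \mathcal{T}'$, namely showing that every $\mathcal{T}$-open set is a union of finite intersections of members of $\mathscr{A}$. Equivalently, fixing a $\mathcal{T}$-open set $O$ and a point $x \in O$, I must produce finitely many $A_1, \dots, A_n \in \mathscr{A}$ with $x \in A_1 \cap \cdots \cap A_n \subseteq O$.

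To build this neighborhood I would combine the two hypotheses through a compactness argument. First, property (b) supplies a $U \in \mathscr{A}$ with $x \in U \subseteq C$ for some compact $C$. The set $K := C \setminus O$ is then closed in the compact set $C$, hence itself compact, and $x \notin K$. For each $y \in K$ I invoke property (a) to obtain disjoint $A_y, B_y \in \mathscr{A}$ with $x \in A_y$ and $y \in B_y$. The family $\{B_y\}_{y \in K}$ is an open cover of the compact set $K$, so finitely many suffice: $K \subseteq B_{y_1} \cup \cdots \cup B_{y_m}$. I then set $A := U \cap A_{y_1} \cap \cdots \cap A_{y_m}$, a finite intersection of members of $\mathscr{A}$ containing $x$.

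It remains to verify $A \subseteq O$. If $z \in A$, then $z \in U \subseteq C$, while $z \in A_{y_i}$ forces $z \notin B_{y_i}$ for every $i$ by disjointness; hence $z \notin B_{y_1} \cup \cdots \cup B_{y_m} \supseteq K$, so $z \in C \setminus K = C \cap O \subseteq O$. This proves $x \in A \subseteq O$ with $A$ basic for $\mathcal{T}'$, giving $\mathcal{T} \subseteq \mathcal{T}'$ and hence equality. When $K = \varnothing$ the intersection over the empty index set is vacuous and one simply takes $A = U \subseteq C \subseteq O$; note also that (b) guarantees $\mathscr{A}$ covers $X$, so $X$ itself arises as the empty intersection, as required of a subbasis.

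I do not expect a serious obstacle here: the argument is a clean adaptation of the standard proof that a locally compact Hausdorff space admits well-behaved neighborhood bases, with $\mathscr{A}$ playing the role of both the separating and the compactifying family. The only point requiring care is to ensure that the separation in (a) and the local compactness in (b) are furnished by the \emph{same} family $\mathscr{A}$, so that the resulting neighborhood $A$ is genuinely a finite intersection of subbasis elements rather than merely a $\mathcal{T}$-open set; the construction above respects this throughout.
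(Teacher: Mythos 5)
Your proof is correct, and it is essentially the argument behind this lemma: the paper itself defers the proof to \cite{minguzzi22}, where the same compactness--separation scheme is used (take $U\subset C$ from (b), separate $x$ from the compact set $C\setminus O$ via (a), extract a finite subcover, and intersect the corresponding $A_{y_i}$ with $U$). Your handling of the edge cases (empty $K$, covering of $X$) is also sound, so there is nothing to fix.
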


The idea in the LMS case is to experiment with some families derived from $d$ to check if they satisfy conditions (a) and (b) and hence provide a subbasis for the topology (the chronological diamonds as in Alexandrov topology are not enough unless additional conditions recalled below are imposed, cf.\ Cor.\ \ref{oddr}).
By using the above Lemma we were able to prove

\begin{theorem}[Uniqueness and Characterization of the Topology]
Let $(X, d)$ be a Lorentzian metric space with no chronological boundary ($I(X)=X$). Then, a topology $\mathcal{T}$ satisfying property (ii) is Hausdorff, locally compact, and unique. Moreover, the sets of the form
\begin{equation}
\{ q : a < d(p, q) < b \} \cap \{ q : c < d(q, r) < e \}
\end{equation}
with $p, r \in X$ and $a, b, c, e \in \mathbb{R} \cup \{-\infty, \infty\}$, form a subbasis for $\mathcal{T}$.
\end{theorem}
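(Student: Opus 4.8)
The plan is to apply Lemma~\ref{lem:subbasis} to the family $\mathscr{A}$ consisting precisely of the sets displayed in the statement, and then to leverage the fact that $\mathscr{A}$ is manufactured \emph{only} from $d$ in order to obtain uniqueness essentially for free. First I would fix an arbitrary topology $\mathcal{T}$ satisfying property (ii) and observe that every member of $\mathscr{A}$ is $\mathcal{T}$-open: since $d$ is continuous in the product topology, the maps $q\mapsto d(p,q)$ and $q\mapsto d(q,r)$ are continuous, so the factors $\{q: a<d(p,q)<b\}$ and $\{q: c<d(q,r)<e\}$ are preimages of open intervals, hence open, and so is their intersection. Thus $\mathscr{A}$ is a legitimate family of open sets to which the lemma applies, once its two hypotheses are checked.

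Next I would verify hypothesis (a), $\mathscr{A}$-Hausdorffness, from the distinction axiom (iii). Given $x\ne y$ there is $z$ with, say, $d(x,z)\ne d(y,z)$ (the case $d(z,x)\ne d(z,y)$ being symmetric, using the first factor with $p=z$ in place of the second with $r=z$). Since $\mathbb{R}$ is Hausdorff I pick disjoint open intervals $(c_1,e_1)\ni d(x,z)$ and $(c_2,e_2)\ni d(y,z)$; letting the first factor be all of $X$ (i.e.\ $a=-\infty$, $b=\infty$), the sets $A=\{q:c_1<d(q,z)<e_1\}$ and $B=\{q:c_2<d(q,z)<e_2\}$ belong to $\mathscr{A}$, contain $x$ and $y$ respectively, and are disjoint. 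For hypothesis (b), $\mathscr{A}$-local compactness, I would invoke $I(X)=X$: any $x$ admits $p,r$ with $p\ll x\ll r$, so the diamond $U=I(p,r)=\{q:0<d(p,q)<\infty\}\cap\{q:0<d(q,r)<\infty\}$ lies in $\mathscr{A}$ and contains $x$, while by (ii) its closure $\overline{I(p,r)}$ is compact; hence $x\in U\subset\overline{I(p,r)}=:C$ with $C$ compact. Lemma~\ref{lem:subbasis} then yields that $\mathscr{A}$ is a subbasis for $\mathcal{T}$, and Hausdorffness and local compactness of $\mathcal{T}$ are immediate byproducts of (a) and (b).

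The conceptually crucial step is uniqueness, and it is here that I would concentrate the care, even though the argument is short. The key point is that $\mathscr{A}$ does not depend on $\mathcal{T}$ at all: it is defined purely through the function $d$. Since the construction above shows that \emph{every} topology $\mathcal{T}$ satisfying (ii) admits this one and the same $\mathscr{A}$ as a subbasis, and a topology is determined by any of its subbases, any two such topologies must coincide. The only place where the ambient $\mathcal{T}$ genuinely intervenes is in the compactness of $\overline{I(p,r)}$ used to check (b)---the closure must be taken in $\mathcal{T}$---but (ii) guarantees this compactness separately for each admissible $\mathcal{T}$, so the verification goes through uniformly and the final conclusion is topology-independent. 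I expect this topology-independence to be the subtle point to state cleanly, since it is what transforms the subbasis lemma into a uniqueness theorem.
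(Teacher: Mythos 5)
Your overall strategy coincides with the paper's: apply Lemma~\ref{lem:subbasis} to the family $\mathscr{A}$ of sets manufactured from $d$ alone, and obtain uniqueness from the observation that $\mathscr{A}$ is then a common subbasis for every topology satisfying (ii). Your verification of $\mathscr{A}$-Hausdorffness from axiom (iii), the openness of the members of $\mathscr{A}$, and the uniqueness logic are all correct.

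There is, however, a genuine gap in your verification of $\mathscr{A}$-local compactness, and it sits exactly where the theorem is nontrivial. You assert that ``by (ii) its closure $\overline{I(p,r)}$ is compact'', but property (ii) of Definition~\ref{defl} does not say this: it only asserts compactness of the sets $I_\epsilon \cap (\overline{I(x,y)} \times \overline{I(x,y)})$ for $\epsilon>0$, which is a priori strictly weaker (the part of the diamond where $d$ is small is not directly captured). Compactness of closed diamonds is the content of Theorem~\ref{cg-lms}(ii), but that theorem only asserts the \emph{existence} of some topology with this stronger property; it cannot be invoked for an arbitrary topology $\mathcal{T}$ satisfying the definitional (ii), because identifying such a $\mathcal{T}$ with the good topology is precisely the uniqueness statement you are trying to prove --- the argument would be circular. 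The step can be repaired using only the definitional axiom, at the price of nesting diamonds: given $x$, use $I(X)=X$ and the openness of $I$ to produce a chain $p''' \ll p'' \ll p' \ll x \ll r' \ll r''$, and set $U = I(p',r') \in \mathscr{A}$ and $\epsilon = d(p'',p')>0$. For every $q \in U$ the reverse triangle inequality gives $d(p'',q) \ge d(p'',p') + d(p',q) \ge \epsilon$, and both $p''$ and $q$ lie in $I(p''',r'') \subset \overline{I(p''',r'')}$ (one must pass to the larger diamond $I(p''',r'')$ here, since $p'' \in \overline{I(p'',r'')}$ can fail in an LMS with gaps). Hence $(p'',q)$ belongs to $K := I_\epsilon \cap \bigl(\overline{I(p''',r'')} \times \overline{I(p''',r'')}\bigr)$, which is compact by (ii), so $U \subset \pi_2(K) =: C$ with $C$ compact, and $x \in U \subset C$ is exactly hypothesis (b) of Lemma~\ref{lem:subbasis}. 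With this substitution the rest of your argument goes through.
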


This unique topology, called {\em the LMS topology}, and denoted $\mathscr{T}$, is really the initial topology induced by the family of functions $\{d_p, d^p : p \in X\}$, where $d_p(\cdot)=d(p,\cdot)$ and $d^p(\cdot)=d(\cdot,p)$. An immediate corollary that can be deduced from this fact is that isometries (bijective distance-preserving maps) are homeomorphisms.

\begin{corollary}\label{crl:isometry-isomorphism}
    Let $(X,d)$ and $(Y,d)$ be two Lorentzian metric spaces without  chronological boundary and let $\phi:X\to Y$ be a bijective distance-preserving map. Then $\phi$ is a topological homeomorphism.
\end{corollary}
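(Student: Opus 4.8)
The plan is to exploit the characterization of the LMS topology as the \emph{initial topology} induced by the family of functions $\{d_p, d^p : p \in X\}$ provided by the preceding theorem. The whole point is that a distance-preserving bijection automatically intertwines these generating functions on the two spaces, so it must be continuous with respect to the two initial topologies; applying the same reasoning to $\phi^{-1}$ then yields the homeomorphism property. In this way the corollary reduces to a direct verification on subbasic sets and requires no separate topological input.

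Concretely, I would first record the distance-preservation hypothesis in the form $d(\phi(x),\phi(y)) = d(x,y)$ for all $x,y\in X$, which is exactly what it means for $\phi$ to be an isometry of the two Lorentzian distances. To establish continuity of $\phi$ it then suffices to verify that the preimage of each subbasic open set of $Y$ is open in $X$. Take a subbasic set of the form $\{q' : a < d(p',q') < b\}$ with $p'\in Y$; using surjectivity of $\phi$, write $p' = \phi(p)$ for some $p\in X$. For $q\in X$ one has $a < d(\phi(p),\phi(q)) < b$ if and only if $a < d(p,q) < b$, so the preimage is precisely $\{q : a < d(p,q) < b\}$, itself a subbasic open set of $X$. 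The identical computation handles the past-type sets $\{q' : c < d(q',r') < e\}$ and, by taking finite intersections, covers the general subbasic elements displayed in the theorem. Hence $\phi$ is continuous.

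Finally, I would observe that $\phi^{-1}\colon Y\to X$ is again a bijective distance-preserving map between two Lorentzian metric spaces without chronological boundary, so the very same argument shows $\phi^{-1}$ is continuous. Combining the two continuity statements gives that $\phi$ is a homeomorphism, which is the assertion to be proved.

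I do not anticipate a genuine obstacle here: the substantive content lies entirely in the topology-characterization theorem, which reduces the corollary to a routine manipulation of initial topologies. The only point deserving explicit care is the use of surjectivity to represent an arbitrary base point $p'\in Y$ as $\phi(p)$, since it is this step that lets the distance-preservation identity convert a subbasic set of $Y$ into one of $X$. Injectivity, by contrast, plays no role in the continuity argument itself and is needed only to make sense of the inverse map $\phi^{-1}$.
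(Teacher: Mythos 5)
Your proposal is correct and follows exactly the route the paper intends: the corollary is stated there as an immediate consequence of the characterization of the LMS topology as the initial topology of the family $\{d_p, d^p : p \in X\}$, and your subbasis computation (using surjectivity to write $p'=\phi(p)$, then distance-preservation to identify preimages of subbasic sets with subbasic sets, and symmetry for $\phi^{-1}$) is precisely the spelled-out version of that deduction.
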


An interesting observation is

\begin{theorem}
The basis elements of $\mathscr{T}$ constructed from $d$ as above are chronologically convex.
\end{theorem}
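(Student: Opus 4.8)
The plan is to reduce the statement to a claim about four elementary one-sided level sets and then invoke the reverse triangle inequality. Recall that a set $U$ is \emph{chronologically convex} when $x,z\in U$ and $x\ll y\ll z$ force $y\in U$; equivalently $I(U)\subseteq U$, and this property is trivially preserved under arbitrary intersections. By the previous theorem each basis element of $\mathscr{T}$ is a finite intersection of subbasic sets $\{q:a<d(p,q)<b\}\cap\{q:c<d(q,r)<e\}$, and each such subbasic set is in turn the intersection of the four elementary types
\[
\{q:d(p,q)>a\},\quad \{q:d(p,q)<b\},\quad \{q:d(q,r)>c\},\quad \{q:d(q,r)<e\}.
\]
Hence it suffices to show that each of these four sets is chronologically convex.

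For the future super-level set $U=\{q:d(p,q)>a\}$ I would argue as follows. If $a<0$ then $U=X$ since $d\ge 0$, and there is nothing to prove; so assume $a\ge 0$ and take $x,z\in U$ with $x\ll y\ll z$. Then $d(p,x)>a\ge 0$ gives $p\ll x$, and combined with $x\ll y$ the reverse triangle inequality yields $d(p,y)\ge d(p,x)+d(x,y)>a$, so $y\in U$. For the future sub-level set $V=\{q:d(p,q)<b\}$, take $x,z\in V$ with $x\ll y\ll z$; the membership $x\in V$ together with $d\ge 0$ forces $b>0$. If $d(p,y)=0$ then $d(p,y)=0<b$; otherwise $p\ll y$ and, using $y\ll z$, the reverse triangle inequality gives $d(p,z)\ge d(p,y)+d(y,z)>d(p,y)$, whence $d(p,y)<d(p,z)<b$. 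In either case $y\in V$.

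The two past-type sets are handled by the time-dual argument: for $\{q:d(q,r)>c\}$ with $c\ge 0$ one uses $z\ll r$ and $y\ll z$ to obtain $d(y,r)\ge d(y,z)+d(z,r)>c$, while for $\{q:d(q,r)<e\}$ one uses $x\ll y$ and, when $d(y,r)>0$, the chain $x\ll y\ll r$ to obtain $d(y,r)<d(x,r)<e$. Intersecting then yields the claim for every basis element. The only genuinely delicate point is the systematic treatment of the degenerate cases: when the relevant Lorentzian distance vanishes the corresponding pair of points is not chronologically related, so the reverse triangle inequality cannot be applied directly and the bound must instead be read off from $d\ge 0$; similarly, the infinite threshold values must be disposed of separately, since they render the associated level set either empty or all of $X$.
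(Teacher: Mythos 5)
Your proof is correct and is essentially the standard argument for this result (the present paper, being a review, states the theorem without proof, deferring to the original works): one decomposes the subbasic sets into the four one-sided level sets of $d_p$ and $d^r$, observes that chronological convexity is stable under intersection, and verifies convexity of each level set via the reverse triangle inequality, with the degenerate cases $d=0$ and infinite thresholds handled exactly as you do. In effect your computations show that $\{q: d(p,q)>a\}$ and $\{q: d(q,r)<e\}$ are future sets while $\{q: d(p,q)<b\}$ and $\{q: d(q,r)>c\}$ are past sets, which is the structural reason behind the convexity; no gap remains.
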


The following provides a condition which ensures the equivalence between LMS topology and Alexandrov topology (which is that generated by chronological diamonds $I(p,q)$)

\begin{corollary} \label{oddr}
    Let $(X,d)$ be a Lorentzian metric space. Assume that for every $x\in X$ we have 
    $x \in \overline{I^\pm (x)}$. Then the LMS topology coincides with the Alexandrov topology.
\end{corollary}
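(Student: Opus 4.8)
The plan is to prove the two topologies coincide by establishing the two inclusions separately. The inclusion ``Alexandrov $\subseteq$ LMS'' is the easy half and holds without the extra hypothesis: since the LMS topology $\mathscr{T}$ is the initial topology of the family $\{d_p,d^p\}$, the sets $I^+(p)=d_p^{-1}((0,\infty))$ and $I^-(q)=(d^q)^{-1}((0,\infty))$ are $\mathscr{T}$-open, hence so is every diamond $I(p,q)=I^+(p)\cap I^-(q)$. Thus every Alexandrov-open set is $\mathscr{T}$-open. The work is all in the reverse inclusion, for which I would show that each subbasis element of $\mathscr{T}$ is Alexandrov-open. Since the Alexandrov-open sets are closed under finite intersection, and the two factors $\{q:a<d(p,q)<b\}$ and $\{q:c<d(q,r)<e\}$ are interchanged by the time-dual $d(x,y)\mapsto d(y,x)$ (under which both the reverse triangle inequality and the symmetric hypothesis $x\in\overline{I^\pm(x)}$ are invariant), it suffices to treat $U:=\{q:a<d(p,q)<b\}$. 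If $a<0$ the lower factor is all of $X$ and if $b\le 0$ the upper factor is $\varnothing$, so I may assume $0\le a<b\le\infty$.

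Fix $q_0\in U$, so $a<d(p,q_0)<b$. The core of the argument is to manufacture the two endpoints of a diamond around $q_0$ using the density hypothesis together with continuity of $d$. For the past endpoint, note that $\{r:d(p,r)>a\}$ is $\mathscr{T}$-open and contains $q_0$, hence is a $\mathscr{T}$-neighborhood of $q_0$; since $q_0\in\overline{I^-(q_0)}$, it meets $I^-(q_0)$, producing a point $r$ with $r\ll q_0$ and $d(p,r)>a\ge 0$. The reverse triangle inequality then propagates this bound: for any $q\in I^+(r)$ we have $d(p,r)>0$ and $d(r,q)>0$, so $d(p,q)\ge d(p,r)+d(r,q)>a$, i.e.\ $I^+(r)\subseteq\{q:d(p,q)>a\}$.

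For the future endpoint I would argue symmetrically. The set $\{s:d(p,s)<b\}$ is a $\mathscr{T}$-neighborhood of $q_0$ (using $d(p,q_0)<b$, and reading it as $X$ when $b=\infty$), so $q_0\in\overline{I^+(q_0)}$ yields $s\gg q_0$ with $d(p,s)<b$. Now for any $q\in I(r,s)$ we have $p\ll q$ (since $q\in I^+(r)\subseteq I^+(p)$) and $q\ll s$, whence $d(p,s)\ge d(p,q)+d(q,s)>d(p,q)$, giving $d(p,q)<d(p,s)<b$. Combining the two bounds, $q_0\in I(r,s)\subseteq U$, so $U$ is Alexandrov-open; by time-duality the same holds for the dual factor, and the reverse inclusion $\mathscr{T}\subseteq$ Alexandrov follows.

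The step I expect to be the main obstacle is precisely the construction of the intermediate points $r$ and $s$: a general LMS carries no geodesics, so one cannot simply interpolate a point at a prescribed distance along a curve from $p$ to $q_0$. The key realization is that the hypothesis $x\in\overline{I^\pm(x)}$ is exactly the density statement needed to extract, via continuity of $d$, a chronologically related point lying inside any prescribed $\mathscr{T}$-neighborhood, after which the reverse triangle inequality does all the remaining work of pushing the distance bounds onto the entire diamond. It is worth recording that this hypothesis also guarantees $I^\pm(q_0)\ne\varnothing$ for every $q_0$ (equivalently $I(X)=X$), which is what ensures the endpoints exist in the first place.
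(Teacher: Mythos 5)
Your overall strategy is the natural one and the one the paper's structure points to: the corollary sits downstream of the subbasis characterization of the LMS topology $\mathscr{T}$, and your two ingredients --- extracting chronologically related points inside prescribed $\mathscr{T}$-neighborhoods via the hypothesis $x\in\overline{I^\pm(x)}$, then propagating the distance bounds through the diamond with the reverse triangle inequality --- are exactly what is needed. Your closing observation that the hypothesis forces $I^\pm(x)\neq\varnothing$ for all $x$, i.e.\ $I(X)=X$, is also essential, since the subbasis theorem is only stated for spaces without chronological boundary.

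There is, however, one case your reduction silently drops. When you write ``if $a<0$ the lower factor is all of $X$ \dots\ so I may assume $0\le a<b\le\infty$'', you discard the subbasis elements of the form $U=\{q: d(p,q)<b\}$ with no active lower bound. These are not subsumed by the case $0\le a$: a point $q_0\in U$ may satisfy $d(p,q_0)=0$, i.e.\ $p\not\ll q_0$, and then your argument breaks at the step ``$p\ll q$ since $q\in I^+(r)\subseteq I^+(p)$'', which needs $d(p,r)>0$ --- unavailable here, since the lower constraint only gives $d(p,r)>a$ with $a<0$. The repair is immediate and uses nothing new: pick any $r\in I^-(q_0)$ (nonempty by hypothesis) and, as in your argument, $s\in I^+(q_0)$ with $d(p,s)<b$; then for $q\in I(r,s)$ either $d(p,q)=0<b$, or $d(p,q)>0$ and $d(q,s)>0$ give $d(p,s)\ge d(p,q)+d(q,s)>d(p,q)$, so in both cases $d(p,q)<b$ and $I(r,s)\subseteq U$. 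With this one-line case distinction inserted (and its time-dual for the other factor), your proof is complete and agrees in substance with the intended derivation of the corollary from the subbasis characterization.
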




Part of our work on LMS involved determining how closely an LMS resembles a BLMS on compact subsets. Our results show that an LMS inherits the properties of a BLMS. The latter are generally easier to analyze due to their stronger compactness properties. This was essentially the strategy through which the following result was obtained

\begin{theorem}
    \label{prop:cg-Polish}
    If $(X,d)$ is a countably generated Lorentzian metric space, then its LMS topology is $\sigma$-compact, second-countable and Polish.
\end{theorem}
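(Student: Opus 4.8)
The plan is to first read off $\sigma$-compactness directly from the countable generating sequence, then to bootstrap second countability by gluing together countably many relatively compact open pieces, and finally to deduce the Polish property from the combination of local compactness and Hausdorffness (both already guaranteed by the uniqueness theorem for the LMS topology) together with second countability. The genuine work will sit in the middle step.

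First I would treat $\sigma$-compactness. Writing $\mathcal{G}=\{p^k\}_{k\in\mathbb{N}}$ for the countable generating set, the identity $X=I(\mathcal{G})=I^+(\mathcal{G})\cap I^-(\mathcal{G})$ means that every $x\in X$ admits generators $p^i\ll x\ll p^j$, so $x\in I(p^i,p^j)$. Hence $X=\bigcup_{i,j} I(p^i,p^j)$. Each diamond $I(p^i,p^j)=d_{p^i}^{-1}((0,\infty))\cap (d^{p^j})^{-1}((0,\infty))$ is open by continuity of $d$, and by Theorem \ref{cg-lms}(ii) (available since $I(X)=X$) its closure $\overline{I(p^i,p^j)}$ is compact. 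Thus $X=\bigcup_{i,j}\overline{I(p^i,p^j)}$ is a countable union of compacta, which gives $\sigma$-compactness.

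Next, second countability. The key observation is that $X$ is covered by the countably many \emph{open} sets $U_{ij}=I(p^i,p^j)$, each contained in the compact set $\overline{I(p^i,p^j)}$. If each such compact diamond is second countable, then every $U_{ij}$, being an open subspace of a second-countable space, is second countable; and since each $U_{ij}$ is open in $X$, the union of their countable bases is a countable basis for $X$. So the whole matter reduces to showing that the \emph{compact Hausdorff} diamond $K=\overline{I(x,y)}$ is second countable. This reduction is where the space must behave like a bounded Lorentzian metric space, and it is the main obstacle. For compact Hausdorff $K$, second countability is equivalent to metrizability and to the existence of a \emph{countable} family of continuous functions separating points. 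A separating family is available in principle, since axiom (iii) of Definition \ref{defl} states exactly that $\{d_p|_K,\,d^p|_K:p\in X\}$ separates points, and, $d$ being continuous, any dense set of parameters still separates (if $d(z,u)\neq d(z,v)$ then $d(z',u)\neq d(z',v)$ for $z'$ near $z$). The difficulty is extracting a \emph{countable} such family, i.e.\ establishing separability on $K$; this is the BLMS input, which I would import from the bounded theory of \cite{minguzzi22}, where the structure induced on $\overline{I(x,y)}$ is a bounded LMS and is shown to be metrizable. Once a countable separating (equivalently, dense) set $D$ is in hand, one can in fact conclude self-containedly via Lemma \ref{lem:subbasis}: the countable family of open sets $\{w:a<d(p,w)<b\}\cap\{w:c<d(w,r)<e\}$ with $p,r\in D\cup\mathcal{G}$ and $a,b,c,e\in\mathbb{Q}\cup\{-\infty,\infty\}$ is $\mathscr{A}$-Hausdorff (disjoint members separate any two points by density of $D$) and $\mathscr{A}$-locally compact (each $x$ lies in some $U_{ij}\subset\overline{I(p^i,p^j)}$), so Lemma \ref{lem:subbasis} certifies it as a countable subbasis.

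Finally I would settle the Polish property. By this point $X$ is second countable, Hausdorff and locally compact, the last two from the uniqueness and characterization theorem for the LMS topology. Such a space is regular, hence metrizable by Urysohn's theorem, and separable; moreover a locally compact, second-countable Hausdorff space is homeomorphic to an open, therefore $G_\delta$, subset of its one-point compactification, which is compact metrizable and hence Polish, so $X$ is Polish as a $G_\delta$ in a Polish space. This closes the plan, the only non-formal ingredient being the second-countability of the compact diamonds isolated above.
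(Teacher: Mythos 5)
Your proposal is correct and follows essentially the strategy the paper itself describes: cover $X$ by the countably many diamonds $I(p^i,p^j)$ whose closures are compact (giving $\sigma$-compactness), treat each compact closed diamond as a bounded Lorentzian metric space and import its metrizability/second countability from the bounded theory of \cite{minguzzi22}, then glue to get second countability and conclude Polishness from local compactness and Hausdorffness. The one ingredient you rightly flag as imported---that $\overline{I(x,y)}$ with the restricted $d$ inherits the BLMS properties---is exactly the step the paper also delegates to the bounded theory, so your argument matches the intended proof.
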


This result is extremely useful as it allows one to apply to LMS the theory of optimal transport, see e.g.\ \cite{braun23b}.

For a BLMS a possible metric that induces the topology $\mathscr{T}$ is the {\em distinction metric}
\begin{equation}
\gamma(x,y)=\max\left(\sup_{z\in X} \vert d(x,z)-d(y,z) \vert, \ \sup_{z\in X} \vert d(z,x)-d(z,y) \vert  \right).
\end{equation}
which measures how much a point $x\in X$ can be distinguished from a point $y\in X$ by using the function $d$.

The {\em Noldus  (strong) metric} is defined as (originally introduced for manifolds)
\begin{equation}
D(x,y)=\sup_{z\in X} \vert d(z,x)+d(x,z)-d(z,y)-d(y,z)\vert .
\end{equation}
For BLMS we proved that the two metrics coincide \cite{minguzzi22}.

The distinction metric has an interesting interpretation that elucidates its properties. Let us consider a
 topological space $(X,\mathscr{T})$ endowed with a continuous bounded function $d\colon X\times X\to [0,\infty)$. The bounded function $\gamma: X\times X \to [0,+\infty)$ defined as above satisfies (a) triangle inequality (b) $\gamma(x,y)=\gamma(y,x)$, namely it is a pseudo-metric (as we are not assuming the distinction property).

For $x\in X$ and $\epsilon>0$ we denote the  distinction ball as follows
\[
B_\epsilon^\gamma(x)=\{p\in X: \gamma(x,p)<\epsilon\}
\]

Now, consider the space $C^0(X,\mathbb{R})\times C^0(X,\mathbb{R})$ endowed with the topology of uniform convergence, namely that induced by the norm
\[
 \Vert (f,g)\Vert=\max\{\sup_ z\vert  f(z) \vert, \sup_ z\vert  g(z) \vert\}.
\]
The associated metric is
\[
\textrm{dist}_\infty((f,f'),(g,g')):=\max\{\textrm{dist}_\infty
(f,g),\textrm{dist}_\infty (f',g')\},
\]
where
\[
\textrm{dist}_\infty(f,g)=\Vert f-g\Vert_\infty=\sup_z \vert f(z)-g(z)\vert.
\]

Consider the Kuratowski-type  map
\begin{align*}
\kappa\colon X &\to C^0(X,\mathbb{R})\times C^0(X,\mathbb{R}) \\
x& \mapsto (d_x,d^x)
\end{align*}
Since $d$ is bounded, the image of $\kappa$ actually lies in the subspace of bounded continuous functions. Observe that the distinction metric is related to the Kuratowski map as follows
\begin{equation}
\gamma(x,y)=(\kappa^*\textrm{dist}_\infty)(x,y)=\textrm{dist}_\infty(\kappa(x),\kappa(y))
\end{equation}
and so that $\gamma$ is continuous if $\kappa$ is continuous.

We have the following result whose proof is found in the proof of \cite[Prop.\ 1.9]{minguzzi22}

\begin{theorem} \label{tjg}
Let $(X,\mathscr{T})$ be a topological space endowed with a  $\mathscr{T}\times \mathscr{T}$-continuous function $d\colon X\times X\to [0,\infty)$ and suppose that the sets $I_\epsilon:=\{(x,y): d(x,y)\ge \epsilon\}$, $\epsilon>0$, are compact (hence $d$ is bounded). Then $\kappa$ is continuous and $\gamma$ is a continuous pseudo-metric. In particular, the distinction balls are $\mathscr{T}$-open.
\end{theorem}
Observe that the first statement uses as assumption condition (ii) of Def.\ \ref{defl} in the version for BLMS \cite[Def.\ 1.1]{minguzzi22}, and the fact that it ensures openness of distinction balls can be used to show that under (ii) the function $d$ passes  under the distance quotient  to a {\em continuous} function $\tilde d$ (continuity is obtained passing to the quotient the inequality $\vert d(p', q')- d(p,q)\vert\le  \vert d(p', q')- d(p',q)\vert +\vert d(p',q)- d(p,q)\vert< \epsilon$ for  $p'\in  B^\gamma_{\epsilon/2}(p)$, $q'\in  B^\gamma_{\epsilon/2}(q)$). This implies that for a BLMS condition (iii) is not really so demanding as it can be recovered assuming just (ii) and performing a distance quotient.

The distinction metric is certainly a very natural tool but is  not so essential in the study of LMS.
In general, metrics are not so important, in fact, passing from  BLMS to LMS it becomes clear that they are non-canonical. They will be replaced with the notion of uniformity that we shall recall later on.


\section{The canonical causal relation}
\label{sec:causal}

From $d$ it is easy to define $I$, we just set $I=\{(x,y)\vert \  d(x,y)>0\}$. We have now the problem of defining also the  causal relation $J$. The problem is solved by considering the  largest relation compatible with the reverse triangle inequality.


\begin{definition}[Causal Relation]
For a pair $(X, d)$  where $d: X \times X\to [0,+\infty)$ satisfies the reverse triangle inequality, the \emph{(extended) causal relation} $J \subset X \times X$ is defined by:
\begin{equation}
J = \big\{ (x, y) \in X \times X \mid d(p, y) \geq d(p, x) \text{ and } d(x, p) \geq d(y, p),\ \forall p \in X \big\}.
\end{equation}
\end{definition}
It is also denoted $\le$ while $J\backslash \Delta$ is also denoted $<$.
For $J$ we get  all the desirable properties. We recall that a relation $R\subset X\times X$ is antisymmetric if $(x,y)\in R$ and $(y,x)\in R$ implies $x=y$.
\begin{theorem} \label{thm:Jproperties}
     The relation $J$ is closed, reflexive, transitive and antisymmetric. Moreover, $I\subset J$ and
     \[
     I\circ J\cup J\circ I\subset I \quad \textrm{(push up/ Kronheimer and Penrose's causal space property)}.
     \]
     If $(x,y), (y,z)\in J$ then
     \begin{equation}
     d(x,y)+d(y,z)\le d(x,z).
      \end{equation}
\end{theorem}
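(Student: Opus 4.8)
The plan is to verify each property by unpacking the definition of $J$ and, crucially, by specializing the ``test point'' $p$ in that definition to one of the endpoints. Throughout I write the defining conditions for $(x,y)\in J$ as $d(p,y)\ge d(p,x)$ and $d(x,p)\ge d(y,p)$ holding for all $p\in X$.

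First I would dispatch the structural properties. Reflexivity is immediate, since the defining inequalities become equalities when $y=x$. Closedness holds because, for each fixed $p$, the sets $\{(x,y):d(p,y)\ge d(p,x)\}$ and $\{(x,y):d(x,p)\ge d(y,p)\}$ are preimages of $[0,\infty)$ under maps that are continuous by the continuity of $d$ in axiom (ii); and $J$ is the intersection of all of these closed sets. Transitivity is a one-line chaining: if $(x,y),(y,z)\in J$ then $d(p,z)\ge d(p,y)\ge d(p,x)$ and $d(x,p)\ge d(y,p)\ge d(z,p)$ for every $p$, so $(x,z)\in J$. For antisymmetry, the assumptions $(x,y),(y,x)\in J$ force $d(p,x)=d(p,y)$ and $d(x,p)=d(y,p)$ for all $p$, so $d$ fails to distinguish $x$ and $y$; the distinction axiom (iii) then yields $x=y$.

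Next I would prove $I\subset J$ and the push-up property. Fix $(x,y)$ with $d(x,y)>0$ and an arbitrary $p$. To obtain $d(p,y)\ge d(p,x)$ I split on $d(p,x)$: if $d(p,x)>0$ the reverse triangle inequality (axiom (i)) gives $d(p,y)\ge d(p,x)+d(x,y)>d(p,x)$, while if $d(p,x)=0$ the inequality is trivial; the symmetric split on $d(y,p)$ gives $d(x,p)\ge d(y,p)$, so $(x,y)\in J$. The push-up property is then a direct endpoint evaluation: if $x\le y\ll z$, the inequality $d(x,p)\ge d(y,p)$ from $(x,y)\in J$, read at $p=z$, gives $d(x,z)\ge d(y,z)>0$; if $x\ll y\le z$, the inequality $d(p,z)\ge d(p,y)$ from $(y,z)\in J$, read at $p=x$, gives $d(x,z)\ge d(x,y)>0$. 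In both cases $(x,z)\in I$.

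The most delicate point, which I expect to be the main obstacle, is the extended reverse triangle inequality $d(x,y)+d(y,z)\le d(x,z)$ for $(x,y),(y,z)\in J$, because axiom (i) applies only when both distances are strictly positive and so does not cover the degenerate cases. The remedy is once more to test $J$ at the endpoints. If $d(x,y)>0$ and $d(y,z)>0$, axiom (i) suffices. If instead $d(x,y)=0$, then $d(x,p)\ge d(y,p)$ at $p=z$ gives $d(x,z)\ge d(y,z)=d(x,y)+d(y,z)$; symmetrically, if $d(y,z)=0$, then $d(p,z)\ge d(p,y)$ at $p=x$ gives $d(x,z)\ge d(x,y)=d(x,y)+d(y,z)$. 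These three cases exhaust all possibilities, completing the proof.
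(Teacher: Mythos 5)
Your proof is correct and complete. The paper itself states this theorem without proof (deferring to the cited works of Minguzzi--Suhr and Bykov--Minguzzi--Suhr), so there is no in-paper argument to compare against; but your strategy --- unpacking the definition of $J$ and specializing the universal test point $p$ to an endpoint ($p=z$ or $p=x$) --- is exactly the natural route, and it is the same device that handles both push-up and the degenerate cases of the extended reverse triangle inequality. In particular, you correctly identified that axiom (i) alone cannot give $d(x,y)+d(y,z)\le d(x,z)$ when one of the two distances vanishes, and that the endpoint evaluation of the $J$-inequalities fills precisely that gap; the case split for $I\subset J$ (trivial when $d(p,x)=0$ or $d(y,p)=0$, reverse triangle otherwise) and the use of axioms (ii) and (iii) for closedness and antisymmetry are likewise exactly what is needed.
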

 This construction allows us to characterize our spaces in terms of familiar causal concepts.

To start with, there is a connection of LMS with global hyperbolicity in the sense of topological ordered spaces
\begin{theorem}\label{thm:lms-via-emeralds}
        Let $(X,d)$ be a Lorentzian metric space without chronological boundary, and $C\subset X$ any compact subset. Let $K$ be a closed order on $X$ such that $I\subset K\subset J$. Then the set $K(C)$ is compact.
    \end{theorem}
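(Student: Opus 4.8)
The plan is to sandwich $K(C)=K^+(C)\cap K^-(C)$ between a finite union of chronological diamonds and to exploit that the closures of such diamonds are compact (Theorem \ref{cg-lms}(ii)), with the push-up property serving as the tool that lets one pass from the causal relation $K$ back to the chronological relation $I$ where the compactness axiom has teeth.

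First, since $I(X)=X$, every $c\in C$ satisfies $a\ll c\ll b$ for suitable $a,b\in X$, and by continuity of $d$ the diamond $I(a,b)=I^+(a)\cap I^-(b)$ is open and contains $c$. These diamonds form an open cover of the compact set $C$, from which I extract a finite subcover $C\subset\bigcup_{i=1}^{n} I(a_i,b_i)$.

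The key step is to show $K(C)\subset\bigcup_{i,j}\overline{I(a_i,b_j)}$. Take $z\in K(C)$, so there are $c,c'\in C$ with $(c,z)\in K\subset J$ and $(z,c')\in K\subset J$. Choosing indices with $c\in I(a_i,b_i)$ and $c'\in I(a_j,b_j)$, I have $(a_i,c)\in I$ and $(c',b_j)\in I$. The push-up property $I\circ J\cup J\circ I\subset I$ of Theorem \ref{thm:Jproperties} then yields $(a_i,z)\in I$ and $(z,b_j)\in I$, that is $a_i\ll z\ll b_j$, whence $z\in I(a_i,b_j)$. Thus $K(C)$ is contained in the finite union $\mathcal{K}:=\bigcup_{i,j}\overline{I(a_i,b_j)}$, which is compact. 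I expect this bridging step to be the crux of the argument: it is precisely where the reverse triangle inequality, encoded in $J$ and in push-up, interacts with the diamond-compactness axiom, so that all the axioms enter together.

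It then remains to see that $K(C)$ is closed, whence it is compact as a closed subset of $\mathcal{K}$. For this I consider $R=\{(c,z,c')\in C\times\mathcal{K}\times C:(c,z)\in K,\ (z,c')\in K\}$, which is closed in the compact space $C\times\mathcal{K}\times C$ because $K$ is closed (Theorem \ref{thm:Jproperties}) and the two defining conditions are preimages of $K$ under continuous projections; hence $R$ is compact. Since $K(C)\subset\mathcal{K}$, the projection of $R$ onto the middle factor equals exactly $K(C)$, and the continuous image of a compact set is compact. This would conclude the proof. (Alternatively, a net-based argument establishes closedness of $K(C)$ directly, but the projection trick sidesteps any first-countability concern.)
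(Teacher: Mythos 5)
Your proof is correct and follows exactly the route the paper's framework intends (the label ``via emeralds'' refers precisely to the finite unions of compact closed diamonds $\bigcup_{i,j}\overline{I(a_i,b_j)}$ you build): cover the compact set $C$ by finitely many open chronological diamonds, use push-up together with $K\subset J$ to trap $K(C)$ inside the emerald, and conclude compactness via the closed relation $K$ and the projection of a compact set. One cosmetic slip: the closedness of $K$ invoked in your projection step is a hypothesis of the theorem itself, not a consequence of Theorem~\ref{thm:Jproperties}, which asserts closedness of $J$ only.
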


    The typical choice would be  $K=J$.
In some applications one might want to work with a different closed relation. Still by imposing suitable properties we are back to LMS.
\begin{theorem}[Equivalence with Global Hyperbolicity] \label{ocaer} $\empty$\\
   A Lorentzian metric space $(X,d)$ such that $I(X)=X$ is characterized by  the following properties:
    \begin{enumerate}
        \item[(i)] $d$ satisfies the reverse triangle inequality;
        \item[(ii)] There is a topology $T$ on $X$ and a closed order $K$, $I\subset K\subset J$ such that $d$ is continuous in the product topology and, for every  compact set $C$, $K(C)$ is compact;
        \item[(iii)] $d$ distinguishes points;
        \item[(iv)] $I(X)=X$.
    \end{enumerate}
\end{theorem}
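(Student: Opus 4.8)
The plan is to read the statement as an equivalence between Definition~\ref{defl} (restricted to $I(X)=X$) and conditions (i)--(iv), and to treat the two implications separately, leaning on Theorem~\ref{cg-lms} and Theorem~\ref{thm:lms-via-emeralds} as the two pillars of the argument.

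For the forward implication I would take an LMS with $I(X)=X$ and simply exhibit the required data by setting $T=\mathscr{T}$, the canonical LMS topology, and $K=J$. Conditions (i), (iii), (iv) are literally the corresponding conditions in Theorem~\ref{cg-lms}, hence already available. For condition (ii), Theorem~\ref{thm:Jproperties} guarantees that $J$ is a closed order with $I\subset J$, and $d$ is continuous in $\mathscr{T}$; the compactness of $K(C)=J(C)$ for every compact $C$ is precisely the content of Theorem~\ref{thm:lms-via-emeralds} applied with $K=J$. So this direction is immediate.

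The substance is in the converse. Assuming (i)--(iv), I would verify that $(X,d)$ meets the hypotheses of Theorem~\ref{cg-lms} with the same topology $T$, the only non-formal point being that $\overline{I(x,y)}$ is compact for all $x,y$. The key observation is the inclusion $I(x,y)\subset K(\{x,y\})$: if $x\ll z\ll y$ then $(x,z),(z,y)\in I\subset K$, so $z\in K^+(x)\cap K^-(y)\subset K(\{x,y\})$. Now, because $K$ is closed in $X\times X$, each slice $K^+(x)=\{z:(x,z)\in K\}$ and $K^-(y)=\{z:(z,y)\in K\}$ is closed, being the preimage of $K$ under a continuous insertion map; hence $K(\{x,y\})=(K^+(x)\cup K^+(y))\cap(K^-(x)\cup K^-(y))$ is closed, while it is compact by hypothesis (ii) taken with $C=\{x,y\}$. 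Therefore $\overline{I(x,y)}\subset K(\{x,y\})$ is a closed subset of a compact set, and so compact. Applying Theorem~\ref{cg-lms} then yields that $(X,d)$ is an LMS with $I(X)=X$.

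The main thing to be careful about is that this compactness conclusion must not covertly assume the space is Hausdorff, since Hausdorffness is a consequence of the LMS structure we are trying to establish rather than a hypothesis here. The argument is designed to avoid it: the fact that a closed subset of a compact set is compact holds in arbitrary topological spaces, and the closedness of $\overline{I(x,y)}$ and of the slices $K^{\pm}(\cdot)$ uses only that $K$ is closed and that the insertion maps $z\mapsto(x,z)$ and $z\mapsto(z,x)$ are continuous into the product topology. I would double-check the set-theoretic identity for $K(\{x,y\})$ and the inclusion chain $\overline{I(x,y)}\subset K^+(x)\cap K^-(y)\subset K(\{x,y\})$, after which the reduction to Theorem~\ref{cg-lms} is purely formal.
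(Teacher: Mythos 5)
Your proof is correct: the forward direction follows exactly as you say by taking $T=\mathscr{T}$ and $K=J$ (Theorems \ref{thm:Jproperties} and \ref{thm:lms-via-emeralds}), and your converse — trapping $\overline{I(x,y)}$ inside the closed set $K^+(x)\cap K^-(y)\subset K(\{x,y\})$, which is compact by hypothesis (ii) with $C=\{x,y\}$, and then invoking Theorem \ref{cg-lms} — is sound, including your care on the point that ``closed subset of a compact set is compact'' needs no Hausdorff assumption. The paper itself states this theorem without an in-text proof (deferring to the cited work), but your route is the natural one it implies: the remark immediately following the theorem, about replacing $C$ by $\{p,q\}$, is precisely the specialization your converse exploits, so this matches the intended argument.
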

This result for $K=J$ establishes that LMS indeed are characterized by properties that coincide with those of global hyperbolicity in the smooth setting, this time formulated via the causal relation. It confirms that our axioms are equivalent to a robust, abstract notion of global hyperbolicity.

The previous result preserves its validity replacing $C\subset X$ with $\{p,q\}, p,q\in X$ thus leading to a more traditional `causal diamonds' formulation.

It is possible to express the whole theory of LMS  using a function $l: X\times X \rightarrow { \{-\infty\}}\cup [0,+\infty)$ instead of  $d$, where $l$ satisfies an extended reverse triangle inequality (i.e.\ holding for every triple $x,y,z\in X$). A number of results clarify the translation between the two choices \cite{minguzzi24b}. For instance, for what concerns the continuity properties

\begin{lemma}\label{lem:braun-d}
    Let $X$ be a topological space, $l: X\times X \rightarrow \{-\infty\}\cup [0,+\infty)$ an arbitrary function.
    Then the following conditions are equivalent:
    \begin{enumerate}
        \item $l$ is upper semi-continuous, and $l_{+}=\max(0,l)$ is lower semi-continuous;
        \item $l_+$ is continuous and the set
        \[
        K_{l}:=\{(x,y)\in X\times X: l(x,y)\geq 0\}
        \]
        is closed.
    \end{enumerate}
\end{lemma}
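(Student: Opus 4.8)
\section*{Proof strategy for Lemma \ref{lem:braun-d}}

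The plan is to prove the two implications separately, exploiting throughout the rigid structure of the codomain: since $l$ takes values only in $\{-\infty\}\cup[0,+\infty)$, we have $l_+=0$ precisely on $\{l=-\infty\}$ and $l_+=l$ precisely on $K_l$. I would record at the outset the standard reformulations of semi-continuity as conditions on level sets of the product topology of $X\times X$: the function $l$ is upper semi-continuous iff every superlevel set $\{l\geq a\}$ is closed, and $l_+$ is lower semi-continuous iff every superlevel set $\{l_+>a\}$ is open. The whole argument then reduces to comparing the superlevel sets of $l$, of $l_+$, and the set $K_l$, according to the sign of the threshold $a$.

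For (1)$\Rightarrow$(2), I would first note that $l_+=\max(0,l)$ is a maximum of two upper semi-continuous functions, namely the constant $0$ and $l$, hence itself upper semi-continuous; combined with the assumed lower semi-continuity of $l_+$, this yields continuity of $l_+$. The closedness of $K_l=\{l\geq 0\}$ is then immediate, being a superlevel set of the upper semi-continuous function $l$.

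For (2)$\Rightarrow$(1), lower semi-continuity of $l_+$ is trivial, since $l_+$ is assumed continuous. It remains to verify that $l$ is upper semi-continuous, i.e.\ that $\{l\geq a\}$ is closed for every real $a$. Here I would split on the sign of $a$: for $a\leq 0$ the set $\{l\geq a\}$ coincides with $K_l$, because the value $-\infty$ is excluded while every nonnegative value is admitted and $l$ takes no value in $[a,0)$; this set is closed by hypothesis. For $a>0$ the value $-\infty$ again contributes nothing, and one checks the identity $\{l\geq a\}=\{l_+\geq a\}$, which is closed because $l_+$ is continuous.

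The only delicate point, and the step I would handle most carefully, is the bookkeeping of the value $-\infty$: the identity relating the superlevel sets of $l$ to those of $l_+$ holds only for positive thresholds, whereas for nonpositive thresholds the superlevel set degenerates to $K_l$. This is exactly why both hypotheses of (2) are genuinely needed and enter in complementary regimes, continuity of $l_+$ controlling the positive thresholds and closedness of $K_l$ controlling the nonpositive ones; dually, it is the max-of-usc observation that lets (1) feed back into the continuity of $l_+$ in (2).
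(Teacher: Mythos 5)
Your proof is correct and complete: both implications are verified by the level-set bookkeeping, the max-of-usc observation handles (1)$\Rightarrow$(2), and the split on the sign of the threshold $a$ — with $\{l\geq a\}=K_l$ for $a\leq 0$ (using the codomain restriction) and $\{l\geq a\}=\{l_+\geq a\}$ for $a>0$ — handles (2)$\Rightarrow$(1). The paper itself states this lemma without proof, deferring to the cited companion work, so there is nothing to compare against directly; your argument is the natural elementary one, and it correctly identifies where each hypothesis of (2) is used and where the restricted codomain $\{-\infty\}\cup[0,+\infty)$ is genuinely needed.
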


Of course, $d$ would correspond to $l_+$. The rationale of this type of results, discussed at length in \cite{minguzzi24b}, is that using $l$ means using $d$  in addition to a special selection for an intermediate closed relation $I\subset K_l\subset J$. However, as soon as one imposes on $K_l$ a condition on the compactness of diamonds, by Theorem \ref{ocaer}, one falls in the category of LMS.

A framework based on $l$ was used by Braun and McCann in \cite{braun23b} in a broad optimal transport study. The non-compact spaces introduced there were also constructed relying on the notion of BMLS (so inheriting its good properties such as having a Polish topology or sharing the validity of limit curve theorems). Once some additional local (GH-unstable) conditions are removed, these spaces turn out to be  LMS in our sense \cite{minguzzi24b}.




Concerning local conditions,  we investigated the following  interesting property.
In a LMS, in general, we have  $\bar I \subset J$  and $\mathscr{A} \subset \mathscr{T}$ where $\mathscr{A}$ is the Alexandrov topology. The equalities do not necessarily hold (the strict inclusion $\bar I \subsetneq J$ would be referred as presence of   {\em causal bubbles} \cite{chrusciel12,minguzzi17,grant20,heveling22}).
We obtained the following result which extends Cor.\ \ref{oddr}

\begin{theorem}[No Gaps Theorem]
Let $(X,d)$ be a  Lorentzian metric space $(X,d)$ such that $I(X)=X$.
If for every point, $p\in \overline{I^\pm(p)}$, then there are no gaps: $\bar I = J$ and  $\mathscr{A} = \mathscr{T}$.
\end{theorem}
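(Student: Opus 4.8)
The plan is to treat the two claimed equalities separately. The second one, $\mathscr{A}=\mathscr{T}$, is already granted by Corollary \ref{oddr}, whose hypothesis ($x\in\overline{I^\pm(x)}$ for every $x$) is precisely the one assumed here, so it may simply be invoked. The substance of the argument therefore lies in establishing $\bar I = J$. The inclusion $\bar I\subset J$ holds unconditionally: by Theorem \ref{thm:Jproperties} we have $I\subset J$ with $J$ closed, whence $\bar I\subset\bar J=J$. It remains to prove the reverse inclusion $J\subset\bar I$, and this is where the approximation hypothesis enters.

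To that end I would fix $(x,y)\in J$ and approximate $x$ from its past and $y$ from its future. Since $x\in\overline{I^-(x)}$, there is a net $x_\alpha\to x$ with $x_\alpha\ll x$; similarly $y\in\overline{I^+(y)}$ yields a net $y_\beta\to y$ with $y\ll y_\beta$. The heart of the matter is a double application of the push-up property $I\circ J\cup J\circ I\subset I$ from Theorem \ref{thm:Jproperties}. From $x_\alpha\ll x$ together with $x\le y$ one obtains $x_\alpha\ll y$; then from $x_\alpha\le y$ together with $y\ll y_\beta$ one obtains $x_\alpha\ll y_\beta$. Hence $(x_\alpha,y_\beta)\in I$ for every pair of indices.

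Finally I would pass to the product net indexed by the pairs $(\alpha,\beta)$: since $x_\alpha\to x$ and $y_\beta\to y$, we have $(x_\alpha,y_\beta)\to(x,y)$ in the product topology, and as each term lies in $I$ this gives $(x,y)\in\bar I$. Together with the unconditional inclusion this yields $\bar I=J$, completing the proof.

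I do not expect a genuine obstacle here; the argument is a clean consequence of push-up combined with the approximation hypothesis, and no appeal to the continuity of $d$ is needed in either inclusion. The only point requiring mild care is the use of nets rather than sequences, since for a general (not countably generated) LMS the topology $\mathscr{T}$ need not be first countable; this is harmless, as the product-net convergence and the definition of closure apply verbatim to nets.
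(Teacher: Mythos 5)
Your proof is correct and is the natural argument: the inclusion $\bar I\subset J$ from the closedness of $J$ (Theorem \ref{thm:Jproperties}), and the reverse inclusion $J\subset\bar I$ by approximating $x$ from its chronological past and $y$ from its chronological future and applying push-up twice, with $\mathscr{A}=\mathscr{T}$ delegated to Corollary \ref{oddr}. The paper, being a survey, states this theorem without proof, but your argument matches the standard route used in the cited works, and your care with nets (rather than sequences) in a possibly non-first-countable $\mathscr{T}$ is exactly the right precaution.
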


Unfortunately, the condition $p\in \overline{I^\pm(p)}$ seems unlikely to be preserved under GH-limits. This is connected to a behavior that can already be observed  in cone structures (where causality derives from a distribution of cones over a smooth manifold \cite{minguzzi17}). The cones can collapse in the limit to cones with non-empty interior so the limit point in the limit space might have an empty chronological future.


\section{Isocausal curves and (pre)length spaces}
\label{sec:length}


A continuous curve $\sigma : [a, b] \to X$ is \emph{isocausal} if $a \leq s < t \leq b$ implies $\sigma(s) < \sigma(t)$.
An isocausal curve is \emph{maximizing} (or {\em maximal})   if for $a\le t<t'<t''\le b$ it satisfies
\begin{equation}
d(\sigma(t),\sigma(t'))+d(\sigma(t'),\sigma(t''))=d(\sigma(t),\sigma(t'')).
\label{eqn:maxcurv}
\end{equation}

\begin{definition}[Lorentzian (pre)length space]
A LMS $(X, d)$ is a:
\begin{itemize}
    \item[(a)] \emph{prelength space} if for every $x \ll y$, there exists an isocausal curve connecting them.
    \item[(b)] \emph{length space} if for every $x \ll y$, there exists a maximizing isocausal curve connecting them.
\end{itemize}
\end{definition}

\begin{remark}
A feature of LMS that must be taken into consideration is that a maximizing curve might have {\em null segments}, that is pair of points in the interior at zero Lorentzian distance though the endpoints are chronologically related.
\end{remark}

A key tool in the study of isocausal curves is the existence of \emph{time functions}, a property also known in the smooth category as {\em stable casuality}. We recall that a time function is a continuous function $\tau$ such that $x< y \Rightarrow \tau(x) <\tau(y)$. For any countably generated LMS, a bounded time function $\tau: X \to [-1,1]$ (continuous and strictly increasing on isocausal curves) can be constructed explicitly from $d$:
\begin{equation}
\tau(x) = \sum_{n=1}^{\infty} \frac{1}{2^n} \left( \frac{d(x_n, x)}{1 + d(x_n, x)} - \frac{d(x, x_n)}{1 + d(x, x_n)} \right),
\end{equation}
where $\{x_n\}$ is a dense sequence.

Note that $\{x_n\}$ distinguishes points in the sense of Def.\ \ref{defl}. Indeed, if $x,y\in X$, $x\ne y$, there is $z\in X$ such that $d(x,z)\ne d(y,z)$ or  $d(z,x)\ne d(z,y)$. By continuity of $d$ these inequalities in $z$ remain valid  in a neighborhood of $z$ where we can find some $x_n$ to replace $z$ (see also the proof of \cite[Prop.\ 1.11]{minguzzi22}. This distinguishing property is what ensures the strict inequality in the right-hand side of  $x< y \Rightarrow \tau(x) <\tau(y)$.

Given a time function $\tau$ any  isocausal curve $\sigma$ admits a {\em $\tau$-uniform reparametrization} $\tau(\sigma(s))=s$.
Time functions allow for a canonical parametrization of curves and are thus crucial for proving limit curve theorems.

In this connection the main idea that we exploit is that our parametrization comes from a time function and hence ultimately from $d$. This is of extreme importance as, thanks to this fact, the isocausal curves can be controlled under GH-limits. For comparison, in Kunzinger-S{\"a}mann approach to length spaces \cite{kunzinger18} one would introduce an auxiliary metric and consider the family of Lipschitz curves to use Ascoli-Arzel\`a and get a limit curve theorem. In our approach this is not  needed but a larger family of curves is used.

Indeed, our curves cannot be compared with those of KS because they impose rectifiability with respect to some metric - an ingredient which is absent in our approach. Our family of curves is thus considerably larger. There are also differences on causality conditions such as non-total imprisonment but, luckily, they do not affect the stronger causality condition: global hyperbolicity \cite{minguzzi23}.

\begin{remark}
In a BLMS the isocausal curves are non-rectifiable with respect to the distinction metric \cite{noldus04b,muller19,minguzzi22} which goes to show that this metric is less useful than one might  expect.
\end{remark}

Before coming to the  limit curve theorem, let us mention some preliminary results. First of all we were able to prove the following Lemma which appears to be  new also in the smooth Lorentzian manifold case \cite{minguzzi24b}. It is used to study convergence of isocausal curves on a countable subset of their domain, to then infer existence of the limit curve on the whole interval.
\begin{lemma}\label{lem:curves-from-dense}
     Let $(X,d)$ be a countably-generated Lorentzian metric space, and let $\tau: X\to \mathbb{R}$ be a time function. Let $a,b\in \mathbb{R}$  be such that $a<b$, and assume that $Q$ is a dense subset of $[a,b]$ such that $a,b\in Q$. Let $\zeta: Q\to X$ be such that
    \begin{itemize}
        \item $\tau(\zeta(t))=t$, $\forall t\in Q,$
        \item  $\zeta(t)\leq \zeta(t')$ whenever $t,t'\in Q$, $a\leq t <t' \leq b$.
    \end{itemize}
    Then $\zeta$ extends to an isocausal curve $\overline{\zeta}:[a,b]\to X$ parametrized by $\tau$. Moreover, if $\zeta$ satisfies the maximizing condition, then $\overline{\zeta}$ is maximizing.
\end{lemma}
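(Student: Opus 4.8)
The plan is to define the extension $\bar\zeta$ pointwise as a limit of the values of $\zeta$ along $Q$, and to extract everything from four facts: the causal diamond $D:=J^+(\zeta(a))\cap J^-(\zeta(b))$ is compact, the relation $J$ is closed, the topology is metrizable, and a time function is injective on causally related pairs. Compactness of $D$ holds because $\zeta(a)\le\zeta(t)\le\zeta(b)$ for every $t\in Q$ (so all values $\zeta(t)$ lie in $D$), while $D$ is a closed (Theorem~\ref{thm:Jproperties}) subset of the compact set $J(\{\zeta(a),\zeta(b)\})$ supplied by Theorem~\ref{thm:lms-via-emeralds} with $K=J$; metrizability is Theorem~\ref{prop:cg-Polish}, which lets me argue throughout with sequences.

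The heart of the argument is the construction of $\bar\zeta(t)$ for $t\in[a,b]\setminus Q$ and the proof that it is well defined. Such a $t$ satisfies $a<t<b$, so density of $Q$ furnishes sequences in $Q$ converging to $t$ from both sides; their $\zeta$-images lie in the compact metrizable set $D$, hence subconverge. I claim every subsequential limit of $\zeta(q_n)$, for any $q_n\to t$ in $Q$, is the same point. First, $\tau(\zeta(q_n))=q_n\to t$ forces the value of $\tau$ at every such limit to equal $t$. Next, if $p$ is a limit along points below $t$ and $q$ a limit along points above $t$, then $\zeta(q_n)\le\zeta(q_m')$ whenever $q_n<t<q_m'$, and closedness of $J$ gives $p\le q$ in the limit; since $\tau(p)=\tau(q)=t$, injectivity of $\tau$ on causal pairs yields $p=q$. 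Splitting an arbitrary convergent subsequence into its below-$t$ and above-$t$ parts, every subsequential limit coincides with this common value, so the full sequence converges; I set $\bar\zeta(t)$ equal to this limit (and $\bar\zeta=\zeta$ on $Q$).

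It then remains to verify the three conclusions, all by the same limiting scheme. For continuity at any $t_\ast\in[a,b]$ I approximate: given $t_n\to t_\ast$, choose $q_n\in Q$ with $q_n\to t_\ast$ and $\zeta(q_n)$ close to $\bar\zeta(t_n)$, and use that $\zeta(q_n)\to\bar\zeta(t_\ast)$ (the statement just proved, which also covers $t_\ast\in Q$). The parametrization $\tau(\bar\zeta(t))=t$ passes to the limit by continuity of $\tau$. For isocausality, given $s<t$ pick $q_n\to s$ and $r_n\to t$ in $Q$ with $q_n<r_n$; then $\zeta(q_n)\le\zeta(r_n)$, closedness of $J$ gives $\bar\zeta(s)\le\bar\zeta(t)$, while $\tau(\bar\zeta(s))=s\neq t=\tau(\bar\zeta(t))$ forces $\bar\zeta(s)<\bar\zeta(t)$. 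Finally, if $\zeta$ is maximizing on $Q$, then for $s_1<s_2<s_3$ I approximate by $q_1^n<q_2^n<q_3^n$ in $Q$, apply the maximizing identity there, and pass to the limit using continuity of $d$.

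The step I expect to be the main obstacle is establishing that $\bar\zeta(t)$ is well defined, i.e.\ that the one-sided limits from above and below exist and agree. This is exactly where the defining features of an LMS must cooperate: compactness of diamonds produces candidate limits, closedness of $J$ transports the causal ordering to the limit, and the time function collapses the resulting comparable pair to a single point. Everything else is a routine repetition of this limiting argument together with continuity of $d$.
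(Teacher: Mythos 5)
Your proof is correct, but a direct comparison is not possible: this paper is a survey and states Lemma \ref{lem:curves-from-dense} \emph{without} proof, deferring to the original reference \cite{minguzzi24b}. So the right test is whether your argument is self-contained given the results the paper does state, and it is. All four of your ingredients are available: $J$ is closed (Theorem \ref{thm:Jproperties}); $J(\{\zeta(a),\zeta(b)\})$ is compact by Theorem \ref{thm:lms-via-emeralds} with $K=J$, and your $D=J^+(\zeta(a))\cap J^-(\zeta(b))$ is a closed subset of it containing $\zeta(Q)$ by the isotonicity hypothesis, hence compact; the topology is Polish, hence metrizable, by Theorem \ref{prop:cg-Polish}, so sequential reasoning is legitimate; and since $<$ denotes $J\setminus\Delta$, a time function separates any two distinct $J$-related points, so $x\le y$ with $\tau(x)=\tau(y)$ forces $x=y$. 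The only hypothesis you use silently is that Theorems \ref{thm:lms-via-emeralds} and \ref{prop:cg-Polish} require no chronological boundary: this is automatic, since countably generated means $X=I(\mathcal{G})$, so every point has nonempty chronological past and future (the paper records $X=I(X)$ for countably generated spaces in its final section). Your central pinching step is sound: any subsequential limit from below is $\le$ any subsequential limit from above (apply closedness of $J$ to the double limit of $\zeta(q_n)\le\zeta(q_m')$), both carry the $\tau$-value $t$, hence they coincide; this single statement makes the set of all subsequential limits a singleton, and compactness plus metrizability then upgrades this to convergence of the full sequence. Your parenthetical claim that the argument also covers $t\in Q$, including the endpoints $a,b$ where only one side is available, is correct because there the comparison with $\zeta(t)$ itself replaces the missing side. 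The remaining verifications---continuity by the diagonal approximation in the metric, $\tau(\bar\zeta(t))=t$ by continuity of $\tau$, strictness of the causal relation from $\tau$-separation, and preservation of the maximizing identity by continuity of $d$---are indeed routine repetitions of the same limiting scheme.
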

The important feature we got here is the continuity property of $\overline{\zeta}$ which is included in the definition of isocausal curve. For instance, as a corollary we get\footnote{Remember that a map $x: [a,b]\to X$ is isotone if $t\le t'$ implies $x(t)\le x(t')$.} for   $Q=[a,b]$

\begin{corollary}
An isotone map $x: [a,b]\to X$ parametrized by a time function (i.e. any map satisfying the two conditions above) is necessarily continuous, and thus an isocausal curve.
\end{corollary}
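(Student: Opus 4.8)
The plan is to deduce the statement directly from Lemma~\ref{lem:curves-from-dense} by taking the dense subset to be the whole interval, so that the ``extension'' step becomes vacuous. I would work throughout in the setting of that lemma, i.e.\ $(X,d)$ a countably generated LMS and $\tau$ a time function. The first observation is that the two defining properties of the map $x$---namely $\tau(x(t))=t$ for all $t\in[a,b]$ and $x(t)\le x(t')$ whenever $a\le t<t'\le b$---are \emph{verbatim} the two hypotheses imposed on $\zeta$ in Lemma~\ref{lem:curves-from-dense}. Hence that lemma will apply to $x$ as soon as we supply an admissible $Q$.

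First I would set $Q=[a,b]$. This is trivially a dense subset of $[a,b]$ containing the endpoints $a,b$, so the hypotheses of Lemma~\ref{lem:curves-from-dense} are met with $\zeta=x$. The lemma then yields an isocausal curve $\overline{x}:[a,b]\to X$, parametrized by $\tau$, extending $\zeta$. Because here $Q$ is the entire domain, the extension relation $\overline{x}|_Q=\zeta$ forces $\overline{x}=x$ on all of $[a,b]$: there is simply no gap left to fill. Consequently $x$ inherits the continuity that is built into the definition of an isocausal curve. The required strictness $x(t)<x(t')$ for $t<t'$ is in any case immediate from the $\tau$-parametrization, since $x(t)=x(t')$ would give $t=\tau(x(t))=\tau(x(t'))=t'$; thus $x$ is itself an isocausal curve.

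All the analytic content---the promotion of the order-theoretic isotonicity to genuine topological continuity---is already carried by Lemma~\ref{lem:curves-from-dense}, where the countable generation of $X$ and the compactness of the chronological diamonds do the real work of controlling limits along the dense set. I therefore do not expect a separate obstacle here. The only point requiring a word is the harmless remark that the lemma's extension of $x$ over $Q=[a,b]$ can be nothing other than $x$ itself, so that the continuity conclusion transfers back verbatim to the given map; everything else is a direct specialization.
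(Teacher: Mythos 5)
Your proposal is correct and is exactly the paper's own argument: the corollary is obtained by specializing Lemma~\ref{lem:curves-from-dense} to $Q=[a,b]$, whereupon the extension must coincide with the given map and its continuity follows from the definition of isocausal curve. Your additional remark on strict monotonicity via the $\tau$-parametrization is a harmless and accurate completion of the same route.
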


Since the geometry is controlled uniquely by the Lorentzian distance, we expect that convergence of isocausal curves should be expressible through it. In fact, this is possible as shown by the following result \cite{minguzzi24b}
\begin{theorem}\label{thm:d-uni}
    Let us consider a countably generated Lorentzian metric space $(X,d)$,
    let $K$ be a compact subset of $X$, let $\sigma_n: [a_n,b_n]\to K$ be a sequence of continuous curves, and let $\sigma: [a,b]\to K$ be yet another continuous curve. The sequence $\sigma_n$ uniformly converges to $\sigma$ iff for each $z\in X$ the sequences of functions  $d_z\circ \sigma_n$ and $d^z\circ \sigma_n$   converge uniformly to $d_z\circ \sigma$ and $d^z\circ \sigma$, respectively.
\end{theorem}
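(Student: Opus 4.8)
The plan is to reduce the topological/metric notion of uniform convergence to an explicit \emph{countable} subfamily of the functions $d_z,d^z$ and then trade off a finite sum against a tail. The starting point is Theorem~\ref{prop:cg-Polish}: since $(X,d)$ is countably generated, $\mathscr{T}$ is second countable, so the compact set $K$ is metrizable, and I want a metric built purely from $d$. Fix a countable dense sequence $\{z_k\}_{k\in\mathbb N}$ and set
\[
\rho(p,q)=\sum_{k=1}^{\infty}\frac{1}{2^{k}}\,\min\Bigl(1,\ |d_{z_k}(p)-d_{z_k}(q)|+|d^{z_k}(p)-d^{z_k}(q)|\Bigr).
\]
The key lemma to establish is that $\rho$ induces $\mathscr{T}$ on $K$. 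For this I would first check that the countable family $\{d_{z_k},d^{z_k}\}$ separates points of $X$: by weak distinction some $z$ gives, say, $d(x,z)\ne d(y,z)$; since $z\mapsto d_x(z)-d_y(z)=d(x,z)-d(y,z)$ is $\mathscr{T}$-continuous and nonzero at $z$, it is nonzero on a neighbourhood, hence at some dense $z_k$, so $d^{z_k}(x)\ne d^{z_k}(y)$ (the case $d(z,x)\ne d(z,y)$ is symmetric, producing a separating $d_{z_k}$). Consequently the initial topology generated by $\{d_{z_k},d^{z_k}\}$ is Hausdorff and coarser than $\mathscr{T}$, so the identity map $(K,\mathscr{T})\to(K,\rho)$ is a continuous bijection from a compact space onto a Hausdorff space, hence a homeomorphism. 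Thus $\rho$ metrizes $\mathscr{T}|_K$, and ``$\sigma_n\to\sigma$ uniformly'' means $\sup_t\rho(\sigma_n(t),\sigma(t))\to0$ (with the convention handling the mismatch of $[a_n,b_n]$ and $[a,b]$, e.g.\ extending each curve by its endpoint values and requiring $a_n\to a$, $b_n\to b$; the argument is insensitive to this choice).

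For the forward implication I would use that each $d_z$ and $d^z$, being $\mathscr{T}$-continuous, restricts to a uniformly continuous function on the compact metric space $(K,\rho)$. Given $\varepsilon>0$ there is then $\delta>0$ with $\rho(p,q)<\delta\Rightarrow|d_z(p)-d_z(q)|<\varepsilon$; for $n$ large, $\sup_t\rho(\sigma_n(t),\sigma(t))<\delta$ gives $\sup_t|d_z(\sigma_n(t))-d_z(\sigma(t))|<\varepsilon$, and symmetrically for $d^z$. This yields uniform convergence of $d_z\circ\sigma_n$ and $d^z\circ\sigma_n$ for every $z$.

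For the reverse implication, assuming uniform convergence of $d_z\circ\sigma_n$ and $d^z\circ\sigma_n$ for every $z$ (in particular for every $z_k$), I would split the series defining $\rho$. Given $\varepsilon>0$, pick $N$ with $\sum_{k>N}2^{-k}<\varepsilon/2$; for each $k\le N$ choose $n$ large enough that $\sup_t|d_{z_k}(\sigma_n(t))-d_{z_k}(\sigma(t))|$ and $\sup_t|d^{z_k}(\sigma_n(t))-d^{z_k}(\sigma(t))|$ are both $<\varepsilon/4$. Then the first $N$ terms contribute at most $\varepsilon/2$ and the tail at most $\varepsilon/2$, so $\sup_t\rho(\sigma_n(t),\sigma(t))<\varepsilon$ for all large $n$, i.e.\ $\sigma_n\to\sigma$ uniformly.

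The main obstacle is the first paragraph: one must guarantee that a \emph{countable} subfamily of the uncountable generating family $\{d_z,d^z:z\in X\}$ already metrizes $K$ compatibly with $\mathscr{T}$. The delicate step is promoting weak distinction — an existence statement about \emph{some} separating $z$ — to separation by points of a \emph{fixed} countable dense set; here the $\mathscr{T}$-continuity of $z\mapsto d_x(z)-d_y(z)$ together with density is essential. Once separation by the countable family is secured, the compact-to-Hausdorff homeomorphism argument upgrades the a priori coarser initial topology to $\mathscr{T}|_K$, and the two implications reduce to routine uniform-continuity and tail estimates.
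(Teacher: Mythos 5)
Your proof is correct, and it takes a genuinely different route from the one the paper indicates. The paper does not prove Theorem~\ref{thm:d-uni} directly: it presents it as a consequence of the canonical (fine) quasi-uniformity of Section~\ref{sec:quasi} (the full proof being in the companion work), i.e.\ of the quasi-metric $p$ built from the sup-norms $\|d_y-d_x\|^+_{X^m}$, $\|d^x-d^y\|^+_{X^m}$ over the compact exhaustion $X^m=\overline{I_R(p^1,\dots,p^m)}$, whose symmetrization $h=p+p^{-1}$ metrizes $\mathscr{T}$ --- in essence, uniform convergence of curves is controlled through the Kuratowski-type embedding $x\mapsto(d_x,d^x)$ into $C(X)\times C(X)$ with its Fr\'echet topology. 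You instead build a ``weak'' metric $\rho$ on $K$ from countably many evaluation points: the dense sequence exists by Theorem~\ref{prop:cg-Polish}, and your promotion of weak distinction to separation by a fixed countable dense family (continuity of $d$ plus density) is the same fact the paper itself invokes in the Cauchy time function proof. Your decisive step --- the identity map from compact $(K,\mathscr{T}|_K)$ onto Hausdorff $(K,\rho)$ is a homeomorphism --- replaces the quasi-uniformity machinery entirely, after which the two implications are Heine--Cantor plus a tail estimate. What each approach buys: the paper's structure is canonical and global (it induces $\mathscr{T}$ and the order $J$ on all of $X$, with no choice of $K$ or of a dense sequence, and it serves many other results), while your argument is elementary and self-contained for this specific statement, and it exhibits its real content as a weak-to-strong upgrade of convergence forced by compactness of $K$, with sup-norms over compacta traded for countably many evaluations. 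Two small things you should state explicitly, though neither is a gap: (i) a compact Hausdorff space carries a unique compatible uniformity, so ``uniform convergence'' in the statement is unambiguous and coincides with $\sup_t\rho(\sigma_n(t),\sigma(t))\to 0$, which is what legitimizes your opening reduction to $\rho$; (ii) the convention for the mismatched domains $[a_n,b_n]$ versus $[a,b]$ (your extension by endpoint values) should be fixed once at the outset --- the paper's statement is equally silent on this point.
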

Ultimately, this theorem is a consequence of the existence of a canonical quasi-uniformity which we shall recall later on.

An important surprising consequence is

\begin{corollary}
Pointwise convergence of the sequence of isocausal curves $\sigma_n$ to the isocausal curve $\sigma$ implies uniform convergence.
\end{corollary}

We are ready to state the limit curve theorem

\begin{theorem}[Limit curve theorem]\label{thm:lim-curv}
    Let $(X, d)$ be a Lorentzian metric space without chronological boundary. Let $\sigma_n: [a_n, b_n] \to X$ be a sequence of isocausal curves parametrized with respect to a given time function $\tau$, $\tau(\sigma_n(t))=t$. Suppose that
    \[
        \lim_{n\to \infty }\sigma_n(a_n)=x, \          \lim_{n\to \infty }\sigma_n(b_n)=y,
    \]
    and $x\neq y$. Then there exists a $\tau$-uniform isocausal curve $\sigma : [a, b] \to X$ and a subsequence $\{\sigma_{n_k}\}_k$ that converges uniformly to $\sigma$. If the curves $\sigma_n$ are maximizing then so is $\sigma$.
\end{theorem}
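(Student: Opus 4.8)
The plan is to combine a diagonal extraction on a countable dense set with the extension Lemma \ref{lem:curves-from-dense} to produce the candidate limit curve, and then to upgrade the resulting pointwise convergence to uniform convergence through Theorem \ref{thm:d-uni}. Throughout I may pass to the closed sub-LMS generated by the curves and their endpoints, so that the ambient space is countably generated and Theorems \ref{prop:cg-Polish} and \ref{thm:d-uni} and Lemma \ref{lem:curves-from-dense} apply; in particular the relevant compact sets are metrizable and sequentially compact.

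First I would fix the limiting domain and confine the curves. Since $\tau$ is continuous and $\tau(\sigma_n(t))=t$, the endpoints satisfy $a_n=\tau(\sigma_n(a_n))\to\tau(x)=:a$ and $b_n=\tau(\sigma_n(b_n))\to\tau(y)=:b$. Because $\sigma_n(a_n)\le\sigma_n(b_n)$ and $J$ is closed (Theorem \ref{thm:Jproperties}), the limit obeys $x\le y$; as $x\ne y$ this forces $x<y$, hence $a<b$. Using the local compactness of $\mathscr{T}$, choose compact neighbourhoods $A\ni x$, $B\ni y$; for large $n$ one has $\sigma_n(a_n)\in A$ and $\sigma_n(b_n)\in B$, so with $C:=A\cup B$ one gets $\sigma_n([a_n,b_n])\subset J^+(\sigma_n(a_n))\cap J^-(\sigma_n(b_n))\subset J(C)=:K$, and $K$ is compact by Theorem \ref{thm:lms-via-emeralds} (with $K=J$).

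Next I would build the limit curve. Fix a countable dense $Q\subset[a,b]$ with $a,b\in Q$. For each interior $t\in Q$ one has $t\in(a_n,b_n)$ eventually, so $\sigma_n(t)\in K$ is defined for large $n$; since $K$ is sequentially compact a diagonal argument yields a subsequence $\sigma_{n_k}$ with $\sigma_{n_k}(t)\to\zeta(t)\in K$ for every interior $t\in Q$, and I set $\zeta(a)=x$, $\zeta(b)=y$. Continuity of $\tau$ gives $\tau(\zeta(t))=t$, and the closedness of $J$ applied to $\sigma_{n_k}(s)\le\sigma_{n_k}(s')$ for $s<s'$ gives $\zeta(s)\le\zeta(s')$ on $Q$ (including the endpoints, since $\sigma_{n_k}(a_{n_k})\to x$ and $\sigma_{n_k}(b_{n_k})\to y$). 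Lemma \ref{lem:curves-from-dense} then extends $\zeta$ to a $\tau$-parametrized isocausal curve $\sigma:=\overline{\zeta}:[a,b]\to X$, with $\sigma([a,b])\subset K$.

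The main step, and the expected obstacle, is to promote convergence on $Q$ to genuine uniform convergence. By Theorem \ref{thm:d-uni} it suffices to prove that $d_z\circ\sigma_{n_k}\to d_z\circ\sigma$ and $d^z\circ\sigma_{n_k}\to d^z\circ\sigma$ uniformly for every $z\in X$. The key observation is that, since $\sigma_n(s)\le\sigma_n(s')$ for $s\le s'$, the defining inequalities of $J$ (Theorem \ref{thm:Jproperties}) force $t\mapsto d(z,\sigma_n(t))$ to be non-decreasing and $t\mapsto d(\sigma_n(t),z)$ non-increasing; the same holds for $\sigma$, so the limit functions $d_z\circ\sigma$ and $d^z\circ\sigma$ are continuous and monotone. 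Extending each $d_z\circ\sigma_{n_k}$ to $[a,b]$ by its endpoint values (which preserves monotonicity and converges at $a,b$) I would invoke the elementary Dini-type fact that a sequence of monotone functions converging pointwise on a dense set to a continuous monotone limit converges uniformly. This yields the uniform convergence of the one-variable functions, hence by Theorem \ref{thm:d-uni} the uniform convergence $\sigma_{n_k}\to\sigma$. Finally, maximality is inherited: if each $\sigma_n$ is maximizing, passing to the limit in \eqref{eqn:maxcurv} along triples of $Q$ and using the continuity of $d$ shows that $\zeta$ satisfies the maximizing condition, whence $\sigma=\overline{\zeta}$ is maximizing by the last clause of Lemma \ref{lem:curves-from-dense}.
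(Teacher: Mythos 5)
First, a point of comparison: this survey does not actually prove Theorem \ref{thm:lim-curv}; it recalls it from \cite{minguzzi24b} and merely arranges the supporting results (Lemma \ref{lem:curves-from-dense}, Theorem \ref{thm:d-uni}) in front of it. Measured against that toolkit, your architecture is the intended one and is essentially correct in the setting where the tools apply: you confine the curves in the compact causally convex set $J(A\cup B)$ via Theorem \ref{thm:lms-via-emeralds} (with the relation chosen to be $J$, legitimate by Theorem \ref{thm:Jproperties}), extract pointwise limits on a countable dense set $Q$, extend by Lemma \ref{lem:curves-from-dense}, and upgrade to uniform convergence via Theorem \ref{thm:d-uni}. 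Your bridging observation is the nicest part and is correct: by the very definition of $J$, the functions $d_z\circ\sigma_n$ are non-decreasing and $d^z\circ\sigma_n$ non-increasing along isocausal curves, so a P\'olya-type lemma (monotone functions converging pointwise on a dense set, with control at the endpoints, to a continuous monotone limit converge uniformly) converts dense pointwise convergence into exactly the hypothesis of Theorem \ref{thm:d-uni}. This bridge is genuinely needed, since the paper's corollary ``pointwise $\Rightarrow$ uniform'' presupposes pointwise convergence everywhere, which you do not have a priori. The endpoint bookkeeping and the passage to the limit in \eqref{eqn:maxcurv} for maximizing curves are also fine.

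The genuine gap is your opening sentence: ``pass to the closed sub-LMS generated by the curves and their endpoints.'' The theorem is stated for an arbitrary LMS without chronological boundary, whereas every tool you use — Lemma \ref{lem:curves-from-dense}, Theorem \ref{thm:d-uni}, and the metrizability of compact sets (Theorem \ref{prop:cg-Polish}) on which your diagonal extraction of $\sigma_{n_k}(t)\to\zeta(t)$ silently relies (a merely compact Hausdorff set need not be sequentially compact) — requires countable generation. The paper has no notion of ``sub-LMS,'' and the restriction of $d$ to a subset $Y$ genuinely can fail the axioms: the distinguishing axiom (iii) may fail completely, since the points distinguishing two elements of $Y$ may all lie outside $Y$ (in the extreme case of two points $p,q$ with $d(p,q)=d(q,p)=0$, the restricted distance is identically zero), and passing to the distance quotient could identify points lying on your curves. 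Moreover, even when the restricted pair happens to satisfy the axioms, its canonical topology — unique only under the no-boundary hypothesis of Theorem \ref{cg-lms} — need not agree with the subspace topology, so ``isocausal,'' ``compact,'' and ``uniformly convergent'' could change meaning; and the natural candidates for $Y$ (an open chronologically convex hull $I(\mathcal{G}_0)$, or the compact set $J(A\cup B)$ itself) respectively make the closures $\overline{I(x,y)}$ in axiom (ii) misbehave or reintroduce a chronological boundary. So either the theorem should carry the countable-generation hypothesis, in which case your proof is complete as written, or the reduction needs an actual argument (e.g.\ that compact subsets of an arbitrary LMS are metrizable and that the two cited results localize to them), which neither you nor this paper supplies.
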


Let $\sigma: [0,1]\to X$ be an  isocausal curve on a LMS. Its  \emph{Lorentzian length} is
\begin{equation}
L(\sigma):=\textrm{inf} \sum_{i=0}^{k-1} d(\sigma(t_i),\sigma(t_{i+1})) ,
\end{equation}
where the infimum is over the set of all partitions.
By the reverse triangle inequality $L(\sigma)\le d(\sigma(0), \sigma(1))$ and equality holds iff it is maximizing.

Standard properties of the Lorentzian length functional are preserved for LMS


\begin{theorem}[Upper semi-continuity of the length functional] $\empty$ \\
Let $(X,d)$ be a Lorentzian metric space. Let $\sigma_n: [0,1]\to X$ and $\sigma: [0,1]\to X$ be  isocausal curves and suppose that $\sigma_n\to \sigma$ pointwise. Then
\begin{equation}
\limsup L(\sigma_n)\le L(\sigma).
\end{equation}
\end{theorem}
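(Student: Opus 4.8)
The plan is to exploit the fact that, because of the reverse triangle inequality, the Lorentzian length is defined as an \emph{infimum} over partitions rather than a supremum; this is precisely the feature that will yield upper (rather than lower) semi-continuity. The key observation is that, for a \emph{fixed} partition $0=t_0<t_1<\dots<t_k=1$ of $[0,1]$, the quantity $\sum_{i=0}^{k-1} d(\sigma_n(t_i),\sigma_n(t_{i+1}))$ is an admissible competitor in the infimum defining $L(\sigma_n)$, so that $L(\sigma_n)$ is bounded above by it. Passing to the limit in $n$ for such a finite sum is then harmless, since $d$ is continuous and $\sigma_n\to\sigma$ pointwise.

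Concretely, I would proceed as follows. First, fix $\epsilon>0$ and, using the definition of $L(\sigma)$ as an infimum, choose a partition $0=t_0<t_1<\dots<t_k=1$ with
\[
\sum_{i=0}^{k-1} d(\sigma(t_i),\sigma(t_{i+1}))\le L(\sigma)+\epsilon.
\]
Second, for every $n$, the same partition is an admissible competitor for $\sigma_n$, whence
\[
L(\sigma_n)\le \sum_{i=0}^{k-1} d(\sigma_n(t_i),\sigma_n(t_{i+1})).
\]
Third, since $\sigma_n(t_i)\to\sigma(t_i)$ for each of the finitely many indices $i$, and $d$ is continuous in the product topology (axiom (ii) of Definition \ref{defl}), the finite sum on the right converges:
\[
\lim_{n\to\infty}\sum_{i=0}^{k-1} d(\sigma_n(t_i),\sigma_n(t_{i+1})) = \sum_{i=0}^{k-1} d(\sigma(t_i),\sigma(t_{i+1})).
\]
Taking $\limsup$ in the inequality of the second step and combining it with the first and third steps yields $\limsup_n L(\sigma_n)\le L(\sigma)+\epsilon$, and since $\epsilon>0$ was arbitrary the claim follows.

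There is essentially no hard analytic obstacle here: the interchange of limit and sum is immediate because the sum is finite, and no equicontinuity or domination argument is needed. The only points demanding care are structural rather than technical, namely that one must invoke the \emph{infimum} characterization of $L$ in the correct direction (a fixed partition overestimates $L(\sigma_n)$ yet nearly attains $L(\sigma)$) and that one uses continuity of $d$ only at the finitely many chosen evaluation points. It is worth emphasizing that, unlike the Riemannian situation where length is lower semi-continuous, here the reverse triangle inequality reverses the monotonicity of the partition sums under refinement, and with it the direction of semi-continuity.
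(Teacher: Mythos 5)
Your proof is correct and is essentially the intended argument: the infimum definition of $L$ lets a single near-optimal partition for $\sigma$ serve as a competitor for every $\sigma_n$, and continuity of $d$ together with pointwise convergence at the finitely many partition points closes the estimate. This is exactly how the upper semi-continuity is obtained in the works this review summarizes, so there is nothing to add.
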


Among the nice properties of $L$ there is the alternative characterization of the length space property.
Let $(X,d)$ be a Lorentzian prelength space. Then for $x\ll y$ we can define
\begin{equation}
\check{d}(x,y)=\sup_{\sigma(0)=x,\sigma(1)=y} L(\sigma),
\end{equation}
where the supremum goes over isocausal curves. If $x\ll y$ fails, we define $\check{d}(x,y)=0$. Note that  $\check{d}\le d$.

\begin{theorem}\label{thm:alt-lenspace}
Let $(X,d)$ be a countably generated Lorentzian prelength space. The following conditions are equivalent
\begin{itemize}
\item[(i)] $ \check d=d$,
\item[(ii)] $(X,d)$ is a length space.
\end{itemize}
\end{theorem}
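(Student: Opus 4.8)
The plan is to prove the two implications separately; the direction $(ii)\Rightarrow(i)$ is essentially formal, while the content sits in $(i)\Rightarrow(ii)$, where I would combine the limit curve theorem with upper semicontinuity of the length functional $L$.

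For $(ii)\Rightarrow(i)$ I would argue as follows. Since $\check d\le d$ always, it suffices to establish the reverse inequality pointwise. If $x\ll y$ fails then $d(x,y)=0=\check d(x,y)$ by definition, so I may assume $x\ll y$. Using that $(X,d)$ is a length space, I pick a maximizing isocausal curve $\sigma$ from $x$ to $y$; for such a curve equality holds in $L(\sigma)\le d(\sigma(0),\sigma(1))$, giving $\check d(x,y)\ge L(\sigma)=d(x,y)$. Hence $\check d=d$.

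For $(i)\Rightarrow(ii)$, fix $x\ll y$ (so $x\ne y$ and $d(x,y)>0$) and aim to construct a maximizing isocausal curve joining them. First, since $\check d(x,y)=d(x,y)>0$ and $(X,d)$ is a prelength space, I would choose isocausal curves $\sigma_n$ from $x$ to $y$ with $L(\sigma_n)\to d(x,y)$. Next I would reparametrize each $\sigma_n$ by the canonical time function $\tau$ of the countably generated LMS; a monotone reparametrization leaves both $L(\sigma_n)$ and the maximizing condition unchanged, and because all endpoints coincide the reparametrized curves share the fixed domain $[a,b]$, with $a=\tau(x)$, $b=\tau(y)$ and $\sigma_n(a)=x$, $\sigma_n(b)=y$. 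Then I would apply the limit curve theorem (Theorem \ref{thm:lim-curv}): since $x\ne y$, a subsequence $\sigma_{n_k}$ converges uniformly to a $\tau$-uniform isocausal curve $\sigma:[a,b]\to X$, and pointwise convergence at the endpoints yields $\sigma(a)=x$, $\sigma(b)=y$. Finally, upper semicontinuity of the length functional gives $\limsup_k L(\sigma_{n_k})\le L(\sigma)$, so
\begin{equation*}
d(x,y)=\lim_k L(\sigma_{n_k})\le L(\sigma)\le d(\sigma(a),\sigma(b))=d(x,y),
\end{equation*}
whence $L(\sigma)=d(\sigma(a),\sigma(b))$ and, by the equality case of the reverse triangle inequality, $\sigma$ is maximizing. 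Thus $(X,d)$ is a length space.

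The step I expect to be most delicate is ensuring that the limit curve retains the prescribed endpoints $x$ and $y$ rather than degenerating to a proper subsegment: this is exactly what the $\tau$-parametrization on a common domain $[a,b]$ secures, since it pins both ends and makes the endpoint hypotheses of Theorem \ref{thm:lim-curv} trivially satisfied. I would also verify carefully that reparametrization by $\tau$ genuinely preserves $L$ and the maximizing property, so that the maximizing sequence is not disturbed, and that the standing assumption $I(X)=X$ is in force so that the limit curve theorem may be invoked.
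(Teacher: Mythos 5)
Your proof is correct and follows exactly the route that the paper's presentation points to (the paper itself defers the proof to the cited reference, but states the limit curve theorem and the upper semicontinuity of $L$ immediately beforehand for precisely this purpose): the direction (ii)$\Rightarrow$(i) is formal from the equality-iff-maximizing characterization of $L(\sigma)\le d(\sigma(0),\sigma(1))$, while (i)$\Rightarrow$(ii) takes a maximizing sequence, reparametrizes by a time function so that Theorem~\ref{thm:lim-curv} applies on a common domain with pinned endpoints, and closes with upper semicontinuity of $L$. Your attention to the two delicate points --- that countable generation guarantees both $I(X)=X$ and the existence of the time function $\tau$, and that $\tau$-uniform parametrization prevents the limit curve from degenerating --- is exactly what makes the argument go through.
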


We were also able to define sectional curvature bounds on LMS, but those will not be recalled here.



\section{Gromov-Hausdorff convergence}
\label{sec:gh}

Since our study of BLMS we found  convenient to define the Gromov-Hausdorff convergence of spacetimes via the notion of correspondence rather than through embeddings. This appeared to be the most natural choice also in view of optimal transport theory, where couplings are measures on the Cartesian product
$X\times Y$.

\begin{definition}
Let $X$ and $Y$ be sets.
A relation $R\subset X\times Y$ is a {\it correspondence} if
\begin{itemize}
\item[(i)]  for all $x\in X$ there exists $y\in Y$ with $(x,y)\in R$, and
\item[(ii)]  for all $y\in Y$ there exists $x\in X$ with $(x,y)\in R$.
\end{itemize}
\end{definition}

A {\em bounded space} is a set with a bounded function  $d_X\colon X\times X\to [0,\infty)$.

\begin{definition}
Let  $(X,d_X)$ and $(Y,d_Y)$ be bounded spaces and $R\subset X\times Y$ be a
correspondence.
The {\it distortion} of  $R$ is defined as:
\begin{equation}
\textrm{dis}\, R:=\sup\{\vert d_X(x,x')-d_Y(y,y') \vert : (x,y), (x',y')\in R\} .
\end{equation}
\end{definition}

\begin{definition}
Let $(X,d_X)$ and $(Y,d_Y)$ be bounded spaces. The {\it Gromov-Hausdorff semi-distance} between $(X,d_X)$ and $(Y,d_Y)$ is defined as
\begin{equation}
d_{GH}(X,Y)=\inf\nolimits_R \textrm{dis}\, R ,
\end{equation}
where the infimum is taken over all correspondences $R\subset X\times Y$.
\end{definition}

The Gromov-Hausdorff semi-distance behaves well for BLMS as we have

\begin{theorem}\label{vmw}
The Gromov-Hausdorff semi-distance between bounded Lorentzian-metric spaces that contain $i^0$ has the following properties:
\begin{itemize}
\item[(a)] The Gromov-Hausdorff semi-distance is non-negative. Further, the
 spaces $(X,d_X)$ and $(Y,d_Y)$ are isometric and homeomorphic, that is, there exists an isometry (which then is also a homeomorphism) $f\colon X\to Y$, if and only if $d_{GH}(X,Y)=0$.
\item[(b)] $d_{GH}(X,Y)=d_{GH}(Y,X)$
\item[(c)] $d_{GH}(X,Z)\le d_{GH}(X,Y)+d_{GH}(Y,Z)$
\end{itemize}
The same result holds with ``contain" replaced by ``do not contain''.
\end{theorem}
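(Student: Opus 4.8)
The plan is to check the three metric axioms, treating (b), (c) and the easy half of (a) as formal manipulations of correspondences and concentrating the work on the rigidity half of (a). For symmetry (b), to a correspondence $R\subset X\times Y$ I associate its transpose $R^{T}=\{(y,x):(x,y)\in R\}$, which is again a correspondence with $\textrm{dis}\,R^{T}=\textrm{dis}\,R$, so the two infima defining $d_{GH}(X,Y)$ and $d_{GH}(Y,X)$ agree. For the triangle inequality (c), given $R_{1}\subset X\times Y$ and $R_{2}\subset Y\times Z$ I form the composition $R_{2}\circ R_{1}=\{(x,z):\exists\,y,\ (x,y)\in R_{1},(y,z)\in R_{2}\}$; it is a correspondence because $R_{1},R_{2}$ surject onto each factor, and for corresponding pairs $(x,z),(x',z')$ with witnesses $y,y'$ the ordinary triangle inequality for $|\cdot|$ gives $|d_{X}(x,x')-d_{Z}(z,z')|\le\textrm{dis}\,R_{1}+\textrm{dis}\,R_{2}$, whence $d_{GH}(X,Z)\le\textrm{dis}\,R_{1}+\textrm{dis}\,R_{2}$ and, taking infima, (c). Non-negativity is built into the definition, and if $f$ is an isometry its graph is a correspondence of zero distortion, so $d_{GH}(X,Y)=0$; this settles one implication of (a).

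The substance is the converse of (a): $d_{GH}(X,Y)=0$ implies $(X,d_X)$ and $(Y,d_Y)$ are isometric. The decisive observation is that, for a BLMS containing $i^{0}$, the space is compact (the one-point compactification) and its topology is metrized by the distinction metric $\gamma$, so $(X,\gamma_{X})$ and $(Y,\gamma_{Y})$ are \emph{compact metric spaces}. I would then bridge $d$ and $\gamma$ by the key estimate that the distortion of any correspondence measured with $\gamma$ is controlled by its distortion measured with $d$: for $(x,y),(x',y')\in R$ and any competitor $w\in X$ in the supremum defining $\gamma_{X}(x,x')$, the correspondence supplies $v\in Y$ with $(w,v)\in R$, and two applications of $\textrm{dis}\,R$ give $|d_{X}(x,w)-d_{X}(x',w)|\le\gamma_{Y}(y,y')+2\,\textrm{dis}\,R$; taking the supremum over $w$ (and the symmetric estimate) yields $|\gamma_{X}(x,x')-\gamma_{Y}(y,y')|\le 2\,\textrm{dis}\,R$. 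Consequently the ordinary positive-signature Gromov--Hausdorff distance between $(X,\gamma_{X})$ and $(Y,\gamma_{Y})$ is at most $2\,d_{GH}(X,Y)=0$.

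With this reduction the classical Gromov rigidity theorem for compact metric spaces gives an isometry $\bar F\colon(X,\gamma_{X})\to(Y,\gamma_{Y})$, in particular a homeomorphism. Concretely $\bar F$ is obtained by taking correspondences $R_{n}$ with $\textrm{dis}\,R_{n}\to0$, a countable $\gamma_{X}$-dense set $\{x_{i}\}$ (separability is automatic from compactness) with points $y_{i}^{n}$ satisfying $(x_{i},y_{i}^{n})\in R_{n}$, extracting by a diagonal argument $y_{i}^{n}\to y_{i}$ in the compact $(Y,\gamma_{Y})$, and setting $F(x_{i})=y_{i}$; this is $\gamma$-isometric on the dense set by the estimate above, hence uniformly continuous, so it extends, and together with the map built from $R_{n}^{T}$ it is seen to be bijective. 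The point is that this very map also preserves $d$: since $|d_{X}(x_{i},x_{j})-d_{Y}(y_{i}^{n},y_{j}^{n})|\le\textrm{dis}\,R_{n}\to0$ and $d_{Y}$ is $\gamma_{Y}$-continuous, one gets $d_{Y}(\bar F(x_{i}),\bar F(x_{j}))=d_{X}(x_{i},x_{j})$ on the dense set and then everywhere by continuity. Hence $\bar F$ is a bijective distance-preserving map, i.e.\ an isometry, and it is simultaneously a homeomorphism, exactly as in Corollary~\ref{crl:isometry-isomorphism}.

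The main obstacle, and the only genuinely Lorentzian point, is precisely this bridge between the Lorentzian distance $d$, which is not a metric and to which the classical ``distance-preserving self-map of a compact space is onto'' argument does not apply directly, and the genuine metric $\gamma$: everything hinges on the distortion estimate above and on the fact that one and the same limit map can be made both a $\gamma$-homeomorphism and $d$-preserving. Finally, the case of spaces \emph{not} containing $i^{0}$ reduces to the one just treated by adjoining $i^{0}$ to both: extending every correspondence by the pair $(i^{0},i^{0})$ leaves all distortions unchanged, since $d(i^{0},\cdot)=d(\cdot,i^{0})=0$, and an isometry of the compactifications necessarily fixes $i^{0}$ and restricts to an isometry of the originals. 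The remainder is bookkeeping with the reverse triangle inequality and the defining estimates of the distortion.
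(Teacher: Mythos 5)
Your proposal is correct. One preliminary remark: this survey states the theorem without proof, deferring to the original reference \cite{minguzzi22}, so the comparison is necessarily with that argument rather than with anything printed in the paper itself. Parts (b), (c) and the forward half of (a) are the standard correspondence manipulations (transpose, composition, graph of an isometry) and are fine. The substantive half of (a) is also sound, and your organization of it is a clean route: the bridge estimate $\vert\gamma_X(x,x')-\gamma_Y(y,y')\vert\le 2\,\textrm{dis}\,R$ (which I checked; the two insertions of $\textrm{dis}\,R$ around the supremum over corresponding competitors work exactly as you say) transfers vanishing Lorentzian GH semi-distance into vanishing classical GH distance between the \emph{compact metric spaces} $(X,\gamma_X)$ and $(Y,\gamma_Y)$, and then you re-run the diagonal-extraction construction so that one and the same limit map is both $\gamma$-preserving and $d$-preserving. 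You correctly isolate the structural inputs recalled in the paper: a BLMS containing $i^0$ is compact, the distinction metric $\gamma$ metrizes its topology (positive definiteness of $\gamma$ is precisely where axiom (iii) enters), and $d$ is continuous in that topology. The original proof works directly with $d$-correspondences in the Lorentzian setting; your detour buys the reuse of textbook compact-metric-space machinery (separability, diagonal extraction, surjectivity of distance-preserving self-maps), at the cost of having to redo the limit construction anyway to recover $d$-preservation --- which you do, so nothing is lost.

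Two steps should be written out rather than asserted. First, ``together with the map built from $R_n^{T}$ it is seen to be bijective'': spell out the standard argument that $G\circ\bar F$ and $\bar F\circ G$ are $\gamma$-preserving self-maps of compact metric spaces, hence surjective, hence isometries, which forces bijectivity of $\bar F$. Second, in the reduction for spaces not containing $i^0$ you must check that adjoining $i^0$ (with $d$ extended by zero) again yields a bounded Lorentzian metric space before the already-proved case can be invoked: continuity of the extended $d$ at pairs involving $i^0$ (this follows from compactness of the sets $I_\epsilon$), and distinction of $i^0$ from every other point (this uses that, by hypothesis, no point of $X$ has both empty past and empty future). The paper asserts that this adjoining is possible, so a citation suffices, but your proof should say so explicitly; the rest of that reduction (distortion unchanged, the isometry of the compactifications fixes $i^0$ by uniqueness of the point with empty past and future, and restricts to a homeomorphism since $\{i^0\}$ is closed) is correct as written.
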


To obtain a true distance, one must pass to equivalence classes. Through the study of causets and $\epsilon$-nets a precompactness theorem can be formulated \cite{minguzzi22}.

The convergence of non-compact spaces is defined by analogy with pointed metric spaces \cite{burago01}.
A key distinction emerges here: while metric geometry employs a single basepoint, the Lorentzian setting requires a countable sequence.

This need for a sequence, rather than a point, is essential to prevent the loss of information near the boundary. A simple analogy is covering the strip between two parallel spacelike lines in Minkowski 1+1 spacetime with causal diamonds; this task requires infinitely many diamonds, reflecting the need for a sequence to adequately capture the spacetime's structure.




Ultimately the analog of {\em pointed metric space} is just a countably generating LMS for which there is a privileged sequence.

\begin{definition}
    We say that $(X,d,\{p^k\}_{k\in\mathbb{N}})$ is a \emph{sequenced} Lorentzian metric space if $(X,d)$ is a Lorentzian metric space and $\{p^i\}_{i\in \mathbb{N}}$ is a  generating set of $X$.
\end{definition}

\begin{definition}
    We say that two sequenced Lorentzian metric spaces
    \[
    (X,d,\{p^k\}_{k\in \mathbb{N}}) \ \ \textrm{and} \ \
      (X',d',\{p'{}^k\}_{k\in \mathbb{N}})
     \]
       are \emph{isomorphic} if there exists a bijective distance preserving map $\phi: X \to X'$ such that $p'{}^n=\phi(p^n)$ for every $n\in \mathbb{N}$.
\end{definition}

Through a somewhat technical notion of $(m,\delta)$-quasi-correspondence \cite[Sec.\ 6.1]{minguzzi24b} (omitted) it is possible to define the GH-convergence for sequenced LMS

\begin{definition}[GH-convergence]
We say that the sequence of sequenced Lorentzian metric spaces $({X}_n,d_n,\{p_n^m\}_{m\in \mathbb{N}})$ GH-converges to the sequenced Lorentzian metric space $({X},d,\{p^m\}_{m\in \mathbb{N}})$ if, for every $m$ and $\delta>0$, there exists $n_0$, such that for every $n \ge n_0$ there exists an $(m,\delta)$ $X_n-X$ quasi-correspondence.
\end{definition}
This definition is compatible with that for BLMS.

\begin{proposition}[Limit uniqueness]
Suppose that a sequence of sequenced Lorentzian metric spaces $({X}_n,d_n,\{p_n^k\}_{k\in\mathbb{N}})$  $GH$-converges  to both the sequenced Lorentzian metric spaces  $({X},d,\{p^k\}_{k\in\mathbb{N}})$ and $({X}',d',\{p'^k\}_{k\in\mathbb{N}})$, then ${X}$ and ${X}'$ are isomorphic.
\end{proposition}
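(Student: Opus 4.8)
The plan is to construct the isomorphism $\phi\colon X\to X'$ as a limit of the \emph{composed} quasi-correspondences, exploiting that both target spaces are generated by sequences matched to the common approximating sequence $\{p_n^k\}$. First I would fix a diagonal selection: for each $m$ set $\delta_m=1/m$ and, using the hypothesis of $GH$-convergence to \emph{both} limits, pick an index $n_m\to\infty$ for which there exist \emph{simultaneously} an $(m,\delta_m)$ quasi-correspondence $R_m\subset X_{n_m}\times X$ and an $(m,\delta_m)$ quasi-correspondence $S_m\subset X_{n_m}\times X'$. Composing through $X_{n_m}$, I would form the relation $T_m\subset X\times X'$ of those pairs $(x,x')$ for which some $w\in X_{n_m}$ satisfies $(w,x)\in R_m$ and $(w,x')\in S_m$. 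Because each quasi-correspondence matches the approximating generating point $p_{n_m}^k$ to $p^k$ and to $p'^k$ (up to $\delta_m$) and has distortion at most $\delta_m$, the estimate $|d(x,y)-d'(x',y')|\le |d(x,y)-d_{n_m}(w,v)|+|d_{n_m}(w,v)-d'(x',y')|\le 2\delta_m$ holds for matched pairs $(x,x'),(y,y')\in T_m$ routed through $w,v\in X_{n_m}$.

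Next I would define the limit correspondence $R_\infty\subset X\times X'$ by declaring $(x,x')\in R_\infty$ whenever there are $(x_m,x'_m)\in T_m$ with $x_m\to x$ in $X$ and $x'_m\to x'$ in $X'$. Continuity of $d$ and $d'$ together with $2\delta_m\to0$ then gives at once that $R_\infty$ has \emph{zero distortion}: $(x,x'),(y,y')\in R_\infty$ forces $d(x,y)=d'(x',y')$. The crucial point is totality. Given $x\in X$, since $I(X)=X$ we have $x\in I(p^i,p^j)$ for some $i,j$; for $m\ge\max(i,j)$ the point $x$ lies in the domain of $R_m$, producing $x'_m\in X'$ with $(x,x'_m)\in T_m$, and applying the distortion estimate to the matched generating pairs yields $d'(p'^i,x'_m)\to d(p^i,x)>0$ and $d'(x'_m,p'^j)\to d(x,p^j)>0$. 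Hence $x'_m\in I(p'^i,p'^j)\subset\overline{I(p'^i,p'^j)}$ for large $m$, and by Theorem \ref{cg-lms} the set $\overline{I(p'^i,p'^j)}$ is compact, so a subsequence of $x'_m$ converges to some $x'$, witnessing $(x,x')\in R_\infty$. The symmetric argument gives totality in the $X'$ direction, so $R_\infty$ is a genuine correspondence.

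Finally I would upgrade $R_\infty$ to the graph of a distance-preserving bijection using the distinction axiom (iii). If $(x,x'),(x,x'')\in R_\infty$, then for every $w'\in X'$ totality furnishes $y\in X$ with $(y,w')\in R_\infty$, and zero distortion gives $d'(x',w')=d(x,y)=d'(x'',w')$ and $d'(w',x')=d(y,x)=d'(w',x'')$; thus no point of $X'$ distinguishes $x'$ from $x''$, forcing $x'=x''$ by (iii). Hence $R_\infty$ is the graph of a map $\phi$, injective by the mirror argument, surjective by totality, and distance preserving by zero distortion. Choosing $w=p_{n_m}^i$ in the definition of $T_m$ shows $(p^i,p'^i)\in R_\infty$, so $\phi(p^i)=p'^i$ and $\phi$ is an isomorphism of sequenced Lorentzian metric spaces (automatically a homeomorphism by Corollary \ref{crl:isometry-isomorphism}).

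I expect the main obstacle to be the totality/compactness step: one must guarantee that the approximate images $x'_m$ cannot escape to infinity or drift to the boundary of $X'$, which is exactly secured by the relative compactness of chronological diamonds --- the content of axiom (ii) as restated in Theorem \ref{cg-lms} --- once the matched generating points are shown to pin $x'_m$ inside a fixed diamond $I(p'^i,p'^j)$ bounded away from its null boundary. A secondary technical point, suppressed here because the precise definition of $(m,\delta)$-quasi-correspondence is omitted, is the verification that these quasi-correspondences genuinely match the distinguished generating points and are total on the region generated by the first $m$ points; the diagonal choice $\delta_m=1/m$ with $n_m\to\infty$ is what lets all these approximations tighten at once.
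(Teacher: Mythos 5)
The paper itself states this proposition without proof (it is imported from the reference \cite{minguzzi24b}), so the comparison is with the argument given there, which follows the same overall scheme as yours: compose quasi-correspondences through $X_{n_m}$, use compactness of closures of chronological diamonds to extract limit points, and use the distinction axiom to collapse the limit relation to the graph of an isometry matching the generating sequences. Your identification of the key ingredients is correct, and your handling of the matched generating points and of the routing estimate $|d(x,y)-d'(x',y')|\le 2\delta_m$ is fine modulo the omitted definition of quasi-correspondence.

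There is, however, a genuine gap at the limit-extraction step: your definition of $R_\infty$ and your compactness argument do not fit together. You define $(x,x')\in R_\infty$ by the existence of a \emph{full} sequence of pairs $(x_m,x'_m)\in T_m$ converging to $(x,x')$, and your zero-distortion argument is valid exactly for that definition, since it compares $(x_m,x'_m)$ and $(y_m,y'_m)$ inside the \emph{same} $T_m$. But your totality argument produces only a subsequence $x'_{m_k}\to x'$, and a subsequential limit does not witness membership in $R_\infty$ as defined: you cannot fill in the missing indices, because $T_{m_k}$ and $T_m$ for $m\neq m_k$ are built over different spaces $X_{n_{m_k}}$, $X_{n_m}$ and bear no inclusion relation to one another. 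If instead you redefine $R_\infty$ as the set of subsequential limits, then totality holds but the zero-distortion proof breaks: two pairs $(x,x')$ and $(y,y')$ obtained along disjoint subsequences can no longer be compared through a common $T_m$, and one cannot exclude that the approximate images oscillate between distinct accumulation points. Since your final distinction step needs zero distortion between \emph{arbitrary} pairs of $R_\infty$ together with totality, neither variant of the definition supports the whole argument. The standard repair, which is in effect what the cited proof does, is a diagonal argument: use separability of $X$ (Theorem \ref{prop:cg-Polish}) to fix a countable dense set $D\subset X$, run your totality construction for all points of $D$ simultaneously, extract one common subsequence along which $x'_m$ converges for every $x\in D$, obtaining a distance-preserving map on $D$; its image is dense by the symmetric argument in $X'$, and it then extends to the desired isomorphism using compactness of diamond closures and axiom (iii). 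Your secondary worry, that the routed points $w\in X_{n_m}$ lie in the region where $S_m$ is total, is real but benign: the distortion estimate pins $w$ inside $I(p_{n_m}^i,p_{n_m}^j)$ for large $m$.
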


Some results \cite{minguzzi24b} explore the dependence of GH-limits on the choice of generating sequences.

The main result of our study,   which motivated our very definition of (B)LMS, is

\begin{theorem}[valid for both BLMS or sequenced LMS]
    Let $X_n$ be a sequence of Lorentzian (pre)length spaces GH-convergent to a Lorentzian metric space $X$. Then $X$ is a Lorentzian (pre)length space.
\end{theorem}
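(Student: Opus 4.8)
The plan is to verify the defining property directly in the limit: given $x\ll y$ in $X$, I must produce an isocausal curve joining them, and a maximizing one when every $X_n$ is a length space. The natural endgame is Lemma \ref{lem:curves-from-dense}. If I can build a map $\zeta$ on a countable dense subset $Q\subset[a,b]$ with $a,b\in Q$ that is $\tau$-parametrized and isotone for the order $\le$ of $X$ (and additive on distances, in the maximizing case), then $\zeta$ extends automatically to an isocausal (resp.\ maximizing) curve. The whole task is thus reduced to manufacturing such a $\zeta$ out of the approximating curves that live in the \emph{different} spaces $X_n$.

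First I would fix compatible time functions. Using the generating sequences $\{p_n^m\}$ and $\{p^m\}$ together with the canonical formula $\tau(x)=\sum_m 2^{-m}\bigl(\tfrac{d(p^m,x)}{1+d(p^m,x)}-\tfrac{d(x,p^m)}{1+d(x,p^m)}\bigr)$, and its analogue $\tau_n$ on $X_n$, the vanishing distortion of the quasi-correspondences $R_n$ forces $\tau_n\to\tau$ along correspondents: if $(q_n,q)\in R_n$ then $\tau_n(q_n)\to\tau(q)$. Next, GH-convergence supplies correspondents $x_n,y_n\in X_n$ of $x,y$ with $\lvert d_n(x_n,y_n)-d(x,y)\rvert\le\textrm{dis}\,R_n\to 0$, so $d_n(x_n,y_n)>0$ and hence $x_n\ll y_n$ for large $n$; as $X_n$ is a (pre)length space there is an isocausal (resp.\ maximizing) curve $\sigma_n$ joining them, which I reparametrize by $\tau_n$ so that its domain is $[a_n,b_n]$ with $a_n\to a:=\tau(x)$ and $b_n\to b:=\tau(y)$.

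Now I transfer the curves to the limit. The image of each $\sigma_n$ lies in the causal diamond $J^+(x_n)\cap J^-(y_n)$, compact by the compactness of causal diamonds (Theorem \ref{thm:lms-via-emeralds}), and under $R_n$ these images land in a fixed compact neighbourhood of $\overline{I(x,y)}$ in $X$. For each $t\in Q\cap(a,b)$, which lies in $(a_n,b_n)$ for large $n$, I choose a correspondent $\tilde{\sigma}_n(t)\in X$ of $\sigma_n(t)$, and I set $\zeta(a)=x$, $\zeta(b)=y$. Relative compactness of diamonds (axiom (ii)) and a diagonal extraction over the countable set $Q$ yield a subsequence along which $\tilde{\sigma}_n(t)\to\zeta(t)$ for every $t\in Q$. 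It remains to check the hypotheses of Lemma \ref{lem:curves-from-dense}. The time-function values pass to the limit, $\tau(\zeta(t))=\lim_n\tau_n(\sigma_n(t))=t$, so the $\zeta(t)$ are mutually distinct. For the order, fix $t<t'$ in $Q$; since the maps $d(\cdot,\zeta(t'))$ and $d(\cdot,\zeta(t))$ are continuous it suffices to test the defining inequalities of $J$ on a countable dense set of points $p\in X$, whose correspondents $q_n\in X_n$ are supplied by the quasi-correspondences. From $\sigma_n(t)\le_n\sigma_n(t')$ we have $d_n(q_n,\sigma_n(t'))\ge d_n(q_n,\sigma_n(t))$ and $d_n(\sigma_n(t),q_n)\ge d_n(\sigma_n(t'),q_n)$, and passing to the limit through the vanishing distortion gives $d(p,\zeta(t'))\ge d(p,\zeta(t))$ and $d(\zeta(t),p)\ge d(\zeta(t'),p)$; hence $(\zeta(t),\zeta(t'))\in J$, which with distinctness yields $\zeta(t)<\zeta(t')$. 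In the length-space case the additive identity \eqref{eqn:maxcurv} for $\sigma_n$ on triples in $Q$ passes to the limit by convergence of distances, so $\zeta$ is maximizing on $Q$. Lemma \ref{lem:curves-from-dense} then extends $\zeta$ to the desired isocausal (resp.\ maximizing) curve $\overline{\zeta}\colon[a,b]\to X$; the BLMS statement follows from the same argument, where global compactness makes the extraction immediate.

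The main obstacle is precisely the inter-space transfer in the third step. Because the $\sigma_n$ live in distinct spaces, neither the limit curve theorem (Theorem \ref{thm:lim-curv}) nor Theorem \ref{thm:d-uni} applies verbatim, and the curves must be imported into $X$ through the correspondences. The delicate points are that correspondences are relations rather than maps—so the choices $\tilde{\sigma}_n(t)$ must be made coherently while distances stay controlled—and that the causal order $\le$, being quantified over all test points, survives the limit only because the distortion vanishes uniformly on the relevant compact diamonds and $d$ is continuous. Arranging the normalization $\tau_n\to\tau$ to be compatible with these correspondences, so that the extracted $\zeta$ is genuinely $\tau$-parametrized and thus eligible for Lemma \ref{lem:curves-from-dense}, is the second technical hurdle.
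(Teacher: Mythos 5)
Your overall strategy is the one the paper itself points to for this theorem (which it states without proof, deferring to \cite{minguzzi22,minguzzi24b}): parametrize the approximating curves by canonical time functions built from $d$ and the generating sequences, transfer them to the limit space through the (quasi-)correspondences, pass to the limit on a countable dense set $Q$ of parameter values, and invoke Lemma~\ref{lem:curves-from-dense} to produce the isocausal (resp.\ maximizing) limit curve. Your handling of the order transfer, of the parametrization identity $\tau(\zeta(t))=t$, and of the maximizing identity in the limit is sound.

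There is, however, one genuine gap, and it sits exactly at the step your diagonal extraction rests on: the assertion that ``under $R_n$ these images land in a fixed compact neighbourhood of $\overline{I(x,y)}$ in $X$.'' This does not follow from anything you have established. A correspondence controls only distance values: the correspondent $\tilde\sigma_n(t)$ is merely some point of $X$ whose distance profile agrees with that of $\sigma_n(t)$ up to $\mathrm{dis}\,R_n$, and the compactness of $J^+(x_n)\cap J^-(y_n)$ granted by Theorem~\ref{thm:lms-via-emeralds} lives in $X_n$; it says nothing about where correspondents sit in $X$, since compactness is not a property read off from approximate distance profiles. The danger is concrete: an isocausal curve may contain null segments, so intermediate points $\sigma_n(t)$ can be at zero $d_n$-distance from $x_n$, from $y_n$, and from most test points; the distortion bound then confines their correspondents only to sets of the form $\{w: d(p,w)\le \epsilon_n+\cdots\}$, which need not be relatively compact. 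Without a compact set in which to extract, $\zeta$ need not exist and the construction collapses.

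The gap is fixable with the same $J$-transfer technique you use later for the order, applied \emph{before} the extraction. Fix $x'\ll x$ and $y'\gg y$ (these exist since $X=I(X)$; in the sequenced case choose them covered by the quasi-correspondences for $m$ large). Let $q_n, r_n$ be correspondents of $x', y'$ and set $\epsilon_n=\mathrm{dis}\,R_n$. From $x_n\le \sigma_n(t)$, the definition of $J$ tested at $q_n$ gives $d_n(q_n,\sigma_n(t))\ge d_n(q_n,x_n)$, whence $d(x',\tilde\sigma_n(t))\ge d_n(q_n,\sigma_n(t))-\epsilon_n\ge d_n(q_n,x_n)-\epsilon_n\ge d(x',x)-2\epsilon_n>0$ for large $n$; symmetrically, $\sigma_n(t)\le y_n$ tested at $r_n$ gives $d(\tilde\sigma_n(t),y')\ge d(y,y')-2\epsilon_n>0$. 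Hence all transferred points lie in $I(x',y')$, whose closure is compact by Theorem~\ref{cg-lms}(ii); in the BLMS case the same estimate places the pairs $(x',\tilde\sigma_n(t))$ in $I_{\delta}$ with $\delta=d(x',x)/2$, which is compact by definition of BLMS. Note that the enlargement from the diamond of $(x,y)$ to that of $(x',y')$ is unavoidable, precisely because of the null-segment phenomenon: one cannot get uniform positive lower bounds on distances from $x$ and $y$ themselves. With this containment in place, your extraction and the application of Lemma~\ref{lem:curves-from-dense} go through as written.
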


Results on the stability under GH-convergence of bounds on sectional curvature were also obtained \cite{minguzzi22}.

\section{The canonical quasi-uniform structure}
\label{sec:quasi}

For a Lorentzian metric space $X$ we define a map (Kuratowski embedding)
\[
\kappa: X\to C(X)\times C(X),
\]
\[
x\mapsto (d_x,d^x).
\]
 $C(X)$ is regarded as a Frechet space with the locally convex topology generated by the semi-norms indexed by compact subsets of $X$
\[
\Vert f\Vert_{K}=\sup_{x\in K} \vert f(x)\vert.
\]
The following result was first proved for BLMS in order to show the second-countability of their topology. However, it holds also for LMS

\begin{lemma}
    Let $(X,d)$ be a Lorentzian metric space without chronological boundary. Then $\kappa: X\to C(X)\times C(X)$ is a homeomorphism into the image.
\end{lemma}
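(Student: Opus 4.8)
The plan is to show that $\kappa$ is a topological embedding by verifying three things: that $\kappa$ is well defined and injective, that it is continuous, and that it is relatively open onto its image $\kappa(X)$; the last two together give that $\kappa$ is a homeomorphism onto its image. Well-definedness is immediate, since axiom (ii) makes $d$ jointly continuous, so each $d_x=d(x,\cdot)$ and $d^x=d(\cdot,x)$ lies in $C(X)$. Injectivity is precisely the distinction axiom (iii): if $\kappa(x)=\kappa(y)$ then $d(x,z)=d(y,z)$ and $d(z,x)=d(z,y)$ for every $z\in X$, which forces $x=y$.

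The substantive step, and the one I expect to be the main obstacle, is the continuity of $\kappa$. Since the target carries the locally convex topology generated by the seminorms $\|\cdot\|_K$ over compact $K\subset X$, it suffices to fix a compact $K$ and show that $x\mapsto d_x$ and $x\mapsto d^x$ are continuous from $(X,\mathscr T)$ into $C(X)$ for the seminorm $\|\cdot\|_K$, i.e.\ $\|d_x-d_{x_0}\|_K\to 0$ as $x\to x_0$. This is an upgrade from the \emph{pointwise} continuity of $d$ to uniform continuity over $K$, and it is exactly here that compactness enters. I would prove the standard transpose-continuity lemma: if $D\colon X\times K\to\mathbb R$ is continuous and $K$ is compact, then $x\mapsto D(x,\cdot)$ is continuous into $C(K)$ with the sup norm. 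The argument fixes $x_0$ and $\epsilon>0$, uses joint continuity at each $(x_0,k)$ to produce open boxes $U_k\times V_k$ on which $D$ oscillates by less than $\epsilon/2$, extracts a finite subcover $V_{k_1},\dots,V_{k_n}$ of $K$, and sets $U=\bigcap_{i} U_{k_i}$; a two-term triangle estimate then yields $\sup_{k\in K}\lvert D(x,k)-D(x_0,k)\rvert<\epsilon$ for all $x\in U$. Applying this to $D=d|_{X\times K}$ and to its transpose $(x,k)\mapsto d(k,x)$ gives continuity of both components of $\kappa$, hence of $\kappa$.

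For the remaining direction, that $\kappa^{-1}$ is continuous on $\kappa(X)$, i.e.\ that $\kappa$ is relatively open, I would compare topologies. The compact-open (seminorm) topology on $C(X)$ refines the topology of pointwise convergence, so the topology on $X$ pulled back by $\kappa$ from the seminorm topology is finer than the one pulled back from pointwise convergence. But the latter is generated by the maps $x\mapsto d(x,z)=\mathrm{ev}_z(d_x)$ and $x\mapsto d(z,x)=\mathrm{ev}_z(d^x)$ for $z\in X$, which is exactly the initial topology of the family $\{d_p,d^p:p\in X\}$. Since $I(X)=X$, the characterization of the LMS topology recalled above identifies this initial topology with $\mathscr T$. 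Hence $\mathscr T$ is contained in the topology pulled back by $\kappa$, which says precisely that $\kappa$ sends $\mathscr T$-open sets to relatively open subsets of $\kappa(X)$. Together with continuity and injectivity, this shows that $\kappa$ is a homeomorphism onto its image.
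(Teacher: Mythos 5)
Your proposal is correct, and it is the natural argument given the tools this paper recalls; note that the paper itself does not reproduce a proof of this lemma (it is quoted from the cited works on BLMS/LMS), so the comparison can only be against the framework it sets up. Your three-step decomposition works: well-definedness and injectivity are exactly axioms (ii) and (iii); continuity of $\kappa$ into the Fr\'echet (compact-open) topology is precisely the upgrade from joint continuity of $d$ to uniform continuity over compact sets, and your transpose-continuity lemma (tube-lemma argument) is valid for an arbitrary topological domain, which matters since no metric is available a priori; and relative openness follows, as you say, because the compact-open topology refines the pointwise topology, whose pullback along $\kappa$ is the initial topology of $\{d_p, d^p : p\in X\}$, which the paper identifies with $\mathscr{T}$ when $I(X)=X$ --- that identification is exactly where the hypothesis of no chronological boundary enters. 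One point worth making explicit to rule out circularity: the identification of $\mathscr{T}$ with the initial topology is obtained in the paper from the subbasis criterion (Lemma \ref{lem:subbasis}), not from the Kuratowski embedding, even though the paper remarks that this embedding was originally introduced to prove second countability for BLMS; so invoking it here is legitimate. It is also worth observing that in the bounded case one could shortcut your third step entirely: for a BLMS (with $i^0$ adjoined) the space is compact, so a continuous injection into a Hausdorff space is automatically a homeomorphism onto its image. Your openness argument is the correct replacement for that compactness shortcut in the unbounded setting, which is precisely what the lemma as stated requires.
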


An interesting result, which we are going to discuss, establishes  that every Lorentzian metric space carries a natural quasi-uniform structure, and every sequenced LMS is quasi-metrizable.

A \emph{quasi-uniformity} $\mathcal{Q}$ on a set $X$ is a filter of subsets of $X \times X$ containing the diagonal and satisfying a condition analogous to the triangle inequality. It induces a topology and a preorder.

 Let $X$ be a set. A \emph{quasi-uniformity} (or a \emph{quasi-uniform structure}) $\mathscr{Q}$ is a filter on the set $X\times X$ such that
    \begin{enumerate}
        \item Every element $U\in \mathscr{Q}$ contains the diagonal,
        \[
        \Delta=\{(x,x): \ x\in X\}\subset U .
        \]
        \item  For every $U\in \mathscr{Q}$ there is $V\in \mathscr{Q}$ such that $V\circ V\subset U$.
    \end{enumerate}
    If, in addition to that, $\mathscr{Q}$ satisfies
    \begin{enumerate}
        \item[3.] for every $U\in \mathscr{Q}$, $U^{-1}\in \mathscr{Q}$, where  $U^{-1}:=\{(x,y): \ (y,x)\in U\}$.
    \end{enumerate}
    we say that $\mathscr{Q}$ is an \emph{uniformity} (or a \emph{uniform structure}).

If $\mathscr{Q}$ is a quasi-uniformity, then
\[
    \mathscr{U}=\mathscr{Q}^*:=\left\{U \cap V^{-1} : \ U,V\in \mathscr{Q} \right\}
\]
is a uniformity,
and
\[
    G=\bigcap\mathscr{Q}:=\cap_{U\in\mathscr{Q}}\,U
\]
is a preorder (reflexive and transitive relation).
The uniformity is Hausdorff if and only if the preorder is antisymmetric  (hence an order). The uniformity induces a topology as follows: the topology $\mathcal{T}(\mathscr{U})$ is that which assign to $x\in X$ the neighborhoods system given by $\{U[x], U\in \mathscr{U}\}$ where $U[x]=\{y\in X: (x,y)\in U\}$.

So, any quasi-uniform space induces a topological preordered space
\[
\left(X,\mathcal{T}(\mathscr{Q}^*),\bigcap\mathscr{Q}\right),
\]
 which turns out to be a closed preordered space as the preorder $G=\bigcap\mathscr{Q}$ is closed in the product topology $\mathcal{T}(\mathscr{Q}^*) \times \mathcal{T}(\mathscr{Q}^*)$.

Remarkably, every LMS comes with a canonically associated quasi-uniformity.

\begin{theorem}[Canonical Quasi-Uniformity]
Let $(X, d)$ be an LMS without boundary. The initial quasi-uniformity of the family of maps
\[
d_p: (X, \mathscr{T}) \to \big(\mathbb{R}, \mathcal{Q}_{\mathbb{R}}\big) \quad \text{and} \quad d^p: (X, \mathscr{T}) \to \big(\mathbb{R}, \mathcal{Q}^{-1}_{\mathbb{R}}\big)
\]
is a quasi-uniformity whose associated topology is $\mathscr{T}$ and whose associated order is the causal relation $J$. Here, $\mathcal{Q}_{\mathbb{R}}$ is the canonical quasi-uniformity of $\mathbb{R}$ generated by the sets of the form $\{(x,y) : -t < y-x\}$, $t>0$.
\end{theorem}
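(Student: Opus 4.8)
The plan is to work directly from the definition of the initial quasi-uniformity and to unwind the three assertions (that it is a quasi-uniformity, that its symmetrization induces $\mathscr{T}$, and that $\bigcap\mathscr{Q}=J$) into explicit computations with the subbasic entourages. Writing $V_t=\{(s,s')\in\mathbb{R}^2: -t<s'-s\}$ for the generators of $\mathcal{Q}_\mathbb{R}$, so that $V_t^{-1}=\{(s,s'): s'-s<t\}$ generates $\mathcal{Q}_\mathbb{R}^{-1}$, the subbasic entourages of the initial structure are the preimages
\[
U_{p,t}^+=(d_p\times d_p)^{-1}(V_t)=\{(x,y): d(p,y)>d(p,x)-t\},
\]
\[
U_{p,t}^-=(d^p\times d^p)^{-1}(V_t^{-1})=\{(x,y): d(y,p)<d(x,p)+t\},
\]
for $p\in X$, $t>0$. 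I take as a base of $\mathscr{Q}$ the finite intersections of these sets and let $\mathscr{Q}$ be their upward closure.

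To see that $\mathscr{Q}$ is a quasi-uniformity, note first that each $U_{p,t}^\pm$ contains the diagonal since $t>0$, hence so does every base element and every member of the filter; as none is empty the filter is proper. For the composition axiom it suffices, after intersecting, to treat a single generator: a two-line chain of inequalities $d(p,z)>d(p,y)-t/2>d(p,x)-t$ gives $U_{p,t/2}^+\circ U_{p,t/2}^+\subset U_{p,t}^+$, and symmetrically $U_{p,t/2}^-\circ U_{p,t/2}^-\subset U_{p,t}^-$. For a finite intersection $U=\bigcap_i U_{p_i,t_i}^{\epsilon_i}$ one takes $V=\bigcap_i U_{p_i,t_i/2}^{\epsilon_i}\in\mathscr{Q}$ and gets $V\circ V\subset U$. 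This is just the halving property of $\mathcal{Q}_\mathbb{R}$ transported through the preimage maps, i.e.\ the standard fact that initial quasi-uniformities exist.

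For the topology I use that a base for $\mathscr{Q}^*$ is $\{U\cap U^{-1}: U\in\text{base of }\mathscr{Q}\}$, so it is enough to compute the trace of each symmetrized generator at a point $x$. One finds
\[
U_{p,t}^+[x]\cap (U_{p,t}^+)^{-1}[x]=\{y: |d_p(y)-d_p(x)|<t\},
\]
and likewise $U_{p,t}^-[x]\cap(U_{p,t}^-)^{-1}[x]=\{y:|d^p(y)-d^p(x)|<t\}$; here both targets $\mathcal{Q}_\mathbb{R}$ and $\mathcal{Q}_\mathbb{R}^{-1}$ symmetrize to the Euclidean uniformity of $\mathbb{R}$, which is why the orientation of the target disappears after symmetrization. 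Thus the neighborhood filter of $\mathcal{T}(\mathscr{Q}^*)$ at $x$ is generated by finite intersections of the sets $\{|d_p(y)-d_p(x)|<t\}$ and $\{|d^p(y)-d^p(x)|<t\}$, which is precisely a neighborhood base for the initial topology of the family $\{d_p,d^p:p\in X\}$. By the earlier characterization theorem this initial topology is $\mathscr{T}$, whence $\mathcal{T}(\mathscr{Q}^*)=\mathscr{T}$.

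For the order I use that $\bigcap\mathscr{Q}$ coincides with the intersection of all subbasic entourages (every member of $\mathscr{Q}$ contains a finite intersection of them, and each of them lies in $\mathscr{Q}$). For fixed $p$, a strict inequality holding up to every $t>0$ collapses to the non-strict one, so $\bigcap_{t>0}U_{p,t}^+=\{(x,y): d(p,y)\ge d(p,x)\}$ and $\bigcap_{t>0}U_{p,t}^-=\{(x,y): d(y,p)\le d(x,p)\}$; intersecting over all $p$ yields
\[
\bigcap\mathscr{Q}=\{(x,y): d(p,y)\ge d(p,x)\ \text{and}\ d(x,p)\ge d(y,p)\ \forall p\in X\}=J.
\]
The real content is conceptual rather than computational: the asymmetric choice of targets, $\mathcal{Q}_\mathbb{R}$ for the future distances $d_p$ and the reversed $\mathcal{Q}_\mathbb{R}^{-1}$ for the past distances $d^p$, is exactly what is needed for the superlevel and sublevel conditions to assemble into the single two-sided relation $J$; had both families used $\mathcal{Q}_\mathbb{R}$ the specialization order would carry the wrong sign on the past distances. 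The only mild obstacle is therefore making the topology claim airtight, namely invoking the prior theorem identifying $\mathscr{T}$ with the initial topology of $\{d_p,d^p\}$ together with the observation that both targets symmetrize to the standard uniformity of $\mathbb{R}$; granting this, all three assertions reduce to the elementary computations above.
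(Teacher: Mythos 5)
Your proof is correct: the subbasic entourages you compute, $U_{p,t}^{+}=\{(x,y):d(p,x)-d(p,y)<t\}$ and $U_{p,t}^{-}=\{(x,y):d(y,p)-d(x,p)<t\}$, are exactly the generators the paper lists immediately after the theorem, and your reduction of the topology claim to the paper's earlier identification of $\mathscr{T}$ as the initial topology of the family $\{d_p,d^p:p\in X\}$ is the intended route. The paper itself states this result without proof (deferring to the companion work), and your direct unwinding — diagonal and composition axioms via halving, symmetrized traces giving the initial-topology neighborhood base, and the collapse of strict to non-strict inequalities yielding $\bigcap\mathscr{Q}=J$ — fills it in along essentially the same lines.
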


More explicitly, the quasi-uniformity on $X$ is generated by sets of the form
\begin{equation}
\big\{(x,y): \quad  (d_y-d_x)(p)<a, \ \  (d^x-d^y)(p)<b \big\},
\end{equation}
for any $p\in X$ and $a,b\in (0,+\infty)$.
Actually, there is a second canonical quasi-uniformity called the {\em fine quasi-uniformity} generated by sets of the form
\begin{equation}
\Big\{(x,y): \quad  \sup_K (d_y-d_x)<a, \ \  \sup_K (d^x-d^y)<b \Big\},
\end{equation}
where $K\subset X$ is any compact subset and $a,b\in (0,+\infty)$. It still has the property that its induced topology and order are those of the LMS.

For a sequenced LMS $(X, d, (p^k))$, we can go further and define an explicit \emph{quasi-metric}.

A {\em quasi-metric}   on a set $X$ is
a function $p:X\times X \to [0,+\infty)$ such that for $x,y,z\in X$
\begin{itemize}
\item[(i)] $p(x,x)= 0$,
\item[(ii)] $p(x,z)\le p(x,y)+p(y,z)$,
\item[(iii)] $p(x,y)=p(y,x)=0 \Rightarrow x=y$
\end{itemize}

If $p$ is a quasi-metric then $q$, defined
by
\[q(x,y)=p(y,x),\] is a quasi-metric called {\em conjugate} of
$p$ often denoted $p^{-1}$.

A quasi-metric induces a Hausdorff quasi-uniformity via the filter determined by the countable base $W_n=\{(x,y)\in X\times X: p(x,y)<1/n\}$ and directly an order $G$ and metric $h$ (hence a topology) via
\begin{align*}
G&=\bigcap
\mathcal{U}=\{(x,y): p(x,y)=0\},\\
h&=p+p^{-1}.
\end{align*}

The canonical quasi-metric of a sequenced LMS (recall that every countably generated LMS can be sequenced) is
\begin{equation}
p(x, y) = \sum_{m=1}^{\infty} 2^{-m} \frac{ \|d_y - d_x\|^+_{X^m} }{1 + \|d_y - d_x\|^+_{X^m} } + \sum_{m=1}^{\infty} 2^{-m} \frac{ \|d^x - d^y\|^+_{X^m} }{1 + \|d^x - d^y\|^+_{X^m} },
\end{equation}
where $X^m = \overline{I_R(p^1, \ldots, p^m)}$ is the closure of the $I_R(:=I\cup \Delta)$-chronologically convex hull of $\{p^1, \ldots, p^m\}$ and $\|f\|^+_K = \sup_K \max(f, 0)$. This quasi-metric induces the fine canonical quasi-uniformity. Its symmetrization $h = p + p^{-1}$ is a metric that is Lipschitz-equivalent to the metric obtained from the Kuratowski embedding $x \mapsto (d_x, d^x)$ .

This structure provides a  unifying framework: the topology is given by $p+p^{-1}$, the order by $p^{-1}(0)=J$, and the convergence of curves can be characterized using the quasi-uniformity. In fact, by using this structure we could recover the results on limit curves previously mentioned.
%
%
%
%
%
%



\section{Existence of Cauchy time functions}

In this section, we present new material that addresses the existence of Cauchy time functions on Lorentzian metric spaces (LMS). While we provide all proofs, we assume the reader has some familiarity with the theory of LMS.

\begin{definition}
An isocausal curve $\gamma: I \to X$, $I=(a,b)$ or $I=[0,b)$, has {\em future endpoint} $p$ if $\lim_{t\to b} \gamma(t)=p$ ($b= +\infty$ is allowed). We have an analogous definition of past endpoint.
\end{definition}

\begin{proposition} \label{wxi}
Let $(X,d)$ be a Lorentzian metric space, $X=I(X)$. Suppose for the isocausal curve $\gamma: I \to X$, $I=(a,b)$ or $I=[0,b)$, there is a sequence $b_k\to b$ such that $\lim_k \gamma(b_k)=p$ for some $p\in X$. Then $\gamma$ has future endpoint $p$. An analogous version holds in the past case.
\end{proposition}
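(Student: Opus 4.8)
The plan is to reduce convergence in the LMS topology $\mathscr{T}$ to convergence of the scalar functions $d_z\circ\gamma$ and $d^z\circ\gamma$, exploiting the fact, recorded after the uniqueness theorem, that $\mathscr{T}$ is the initial topology of the family $\{d_p,d^p:p\in X\}$. Under this reduction, proving $\lim_{t\to b}\gamma(t)=p$ amounts to proving, for every $z\in X$, that $d(z,\gamma(t))\to d(z,p)$ and $d(\gamma(t),z)\to d(p,z)$ as $t\to b$.

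First I would establish monotonicity along the curve. For parameters $s<t$ in $I$, isocausality gives $\gamma(s)<\gamma(t)$, hence $(\gamma(s),\gamma(t))\in J$. By the defining property of $J$ in Theorem \ref{thm:Jproperties}, this yields $d(z,\gamma(t))\ge d(z,\gamma(s))$ and $d(\gamma(s),z)\ge d(\gamma(t),z)$ for every $z\in X$. Thus $t\mapsto d(z,\gamma(t))$ is non-decreasing and $t\mapsto d(\gamma(t),z)$ is non-increasing on $I$. This is the single step where the order structure (ultimately the reverse triangle inequality) does the essential work.

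Next, since a monotone real function on an interval possesses a one-sided limit at the right endpoint, the limits $L_z:=\lim_{t\to b}d(z,\gamma(t))$ and $M_z:=\lim_{t\to b}d(\gamma(t),z)$ exist in $[0,+\infty]$. I would then identify these limits using the hypothesized subsequence: because $b_k\to b$ with $\gamma(b_k)\to p$ and $d$ is continuous in the product topology, continuity gives $d(z,\gamma(b_k))\to d(z,p)$ and $d(\gamma(b_k),z)\to d(p,z)$. Since $b_k\to b$, the same sequences must also converge to the monotone limits $L_z$ and $M_z$; hence $L_z=d(z,p)$ and $M_z=d(p,z)$, both finite. Monotonicity is precisely what forces the subsequential limit and the full one-sided limit to agree, so no separate compactness or diameter argument is needed.

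Finally, having shown $d(z,\gamma(t))\to d(z,p)$ and $d(\gamma(t),z)\to d(p,z)$ for every $z\in X$, the initial-topology description of $\mathscr{T}$ gives $\gamma(t)\to p$, i.e.\ $p$ is the future endpoint of $\gamma$. The past case is entirely symmetric, using that $t\mapsto d(z,\gamma(t))$ then becomes non-increasing and $t\mapsto d(\gamma(t),z)$ non-decreasing. I do not anticipate a serious obstacle; the only delicate point is the interchange between the subsequential limit and the full monotone limit, which the monotonicity from $J$ resolves automatically.
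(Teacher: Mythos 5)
Your proof is correct, but it takes a genuinely different route from the paper's. The paper argues by contradiction using local compactness of $\mathscr{T}$: it fixes a compact neighborhood $C$ of $p$ and shows that failure of convergence yields a second sequence $e_k\to b$ with $\gamma(e_k)\to q\neq p$ (with $q$ found either in $C\setminus O$ for some open $O\ni p$, or on $\partial C$); interleaving the $e_k$ with the $b_k$ and using that $J$ is closed and antisymmetric then forces $p\le q\le p$, hence $p=q$, a contradiction. You instead reduce the statement to scalar analysis: isocausality makes $t\mapsto d(z,\gamma(t))$ non-decreasing and $t\mapsto d(\gamma(t),z)$ non-increasing (note this monotonicity is the \emph{defining} property of the causal relation $J$, not really the content of Theorem \ref{thm:Jproperties}), so both have one-sided limits at $b$; continuity of $d$ along the hypothesized subsequence identifies these limits as $d(z,p)$ and $d(p,z)$; and the description of $\mathscr{T}$ as the initial topology of $\{d_z,d^z : z\in X\}$ converts this into $\gamma(t)\to p$, via the standard net characterization of convergence in an initial topology. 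Both routes rest on the Section \ref{sec:topology} machinery---the paper's on Hausdorffness, local compactness, and the closed antisymmetric order; yours on the initial-topology characterization, which is equivalent to the subbasis statement of the uniqueness theorem and is only available under the standing hypothesis $I(X)=X$, which you correctly have. What your argument buys is the elimination of the compactness extraction and the case analysis (and it covers $b=+\infty$ with no extra comment); what it costs is reliance on the full topology-characterization theorem, whereas the paper's argument stays closer to the primitive structure and would survive in any setting equipped with a locally compact Hausdorff topology and a closed antisymmetric order containing the curve's chronology.
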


\begin{proof}
Let $C$ be a compact neighborhood of $p$. Since $\gamma(b_k)\to p$, either there is $T>0$ such that for $s>T$, $\gamma(s)\in C$, or the curve enters and escapes indefinitely $C$. In the former case, if there is no convergence then we can find an open set $O\ni p$ and a sequence $e_k\to b$,  such that $\gamma(e_k)\in C\backslash O$,   $\gamma(e_k)\to q\in C\backslash O$, hence $q\ne p$.

The latter case means that we can find $c_k,d_k\to b$, $c_k<d_k<c_{k+1}$, such that $\gamma(c_k)\in C$ and $\gamma(d_k)\notin C$. By the continuity of the curve we can find $e_k\to b$, $c_k< e_k <d_k$, such that $\gamma(e_k)\in \p C$,  $\gamma(e_k)\to q \ne p$.

In both cases, for every $k$, $\gamma(b_k)\le \gamma(e_{s(k)})\le \gamma(b_{r(k)})$ for suitable monotone functions $s(k)$ and $r(k)$, thus taking the limit $p\le q\le p$, which implies $p=q$, a contradiction.
\end{proof}

\begin{definition}
An isocausal curve $\gamma: I \to X$, $I$ connected interval of the real line, is said to be {\em future inextendible} if it has no future endpoint. We have analogous definitions of {\em past inextendible} and {\em inextendible}.
\end{definition}

\begin{definition}
A future inextendible isocausal curve $\gamma: [0,b)\to X$, is {\em future total  imprisoned} in a compact set $C$ if there is $B\in [0,b)$ such that $\gamma(t)\in C$ for every $t>B$. We have an analogous definition for  {\em past total  imprisoned}.
\end{definition}

\begin{definition}
A Lorentzian metric space is {\em non-total imprisoning} if there is no future inextendible isocausal curve future total imprisoned in a compact set and similarly in the past case.
\end{definition}

\begin{definition}
A future inextendible isocausal curve $\gamma: [0,b)\to X$, is {\em future partial  imprisoned} in a compact set $C$ if there are $b_k\to b$, such that $\gamma(b_k)\in C$. We have an analogous definition for  {\em past partial  imprisoned}.
\end{definition}

\begin{definition}
A Lorentzian metric space is {\em non-partial imprisoning} if there is no future inextendible isocausal curve future partial imprisoned in a compact set and similarly in the past case.
\end{definition}

Clearly, a curve which is future total imprisoned in a compact set is also future partial imprisoned in it, thus the non-partial imprisonment property implies the non-total imprisonment property.

\begin{proposition}
Every Lorentzian metric space $(X,d)$, $X=I(X)$, is non-partial imprisoning.
\end{proposition}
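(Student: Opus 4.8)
The plan is to argue by contradiction and to reduce everything to Proposition~\ref{wxi}, which already contains all the analytic content. For the future case, I would suppose that there exists a future inextendible isocausal curve $\gamma:[0,b)\to X$ that is future partial imprisoned in some compact set $C$. By the definition of partial imprisonment this yields a sequence of parameters $b_k\to b$ with $\gamma(b_k)\in C$ for every $k$.

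First I would exploit the compactness of $C$. Since all the points $\gamma(b_k)$ lie in the compact set $C$, the sequence $\bigl(\gamma(b_k)\bigr)_k$ admits a convergent subsequence $\gamma(b_{k_j})\to p$ with limit $p\in C\subset X$. Passing to this subsequence leaves the parameters tending to $b$, i.e.\ $b_{k_j}\to b$, so the hypotheses of Proposition~\ref{wxi} are exactly met: $\gamma$ is an isocausal curve on $[0,b)$ and there is a sequence of parameters tending to $b$ along which $\gamma$ converges to a point $p\in X$.

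Proposition~\ref{wxi} then forces $\gamma$ to possess the future endpoint $p$, that is, $\lim_{t\to b}\gamma(t)=p$. This directly contradicts the assumption that $\gamma$ is future inextendible, which by definition means precisely that $\gamma$ has no future endpoint. Hence no such curve can exist, and $(X,d)$ is non-partial imprisoning in the future; the past case follows by applying the dual statement already recorded in Proposition~\ref{wxi}.

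I do not expect any genuine obstacle here: the single nontrivial ingredient is Proposition~\ref{wxi} itself, and once it is in hand the present statement needs nothing more than the elementary extraction of a convergent subsequence from a sequence in a compact set. In effect, the entire difficulty has been front-loaded into the preceding proposition, and the role of compactness (property (ii) of Definition~\ref{defl}) is simply to guarantee that a limit point $p$ of the imprisoned sequence actually exists in $X$.
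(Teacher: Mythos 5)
Your proof is correct and follows essentially the same route as the paper's own argument: assume a future inextendible curve is partially imprisoned in a compact set $C$, extract a convergent subsequence $\gamma(b_{k_j})\to p\in C$, invoke Proposition~\ref{wxi} to conclude $p$ is a future endpoint, and contradict inextendibility. The paper's proof is word-for-word this argument, including the reliance on Proposition~\ref{wxi} as the sole nontrivial ingredient.
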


\begin{proof}
Suppose that there is a future inextendible isocausal curve $\gamma:[0,b) \to X$ future imprisoned in a compact set $C$. Let $b_k\to b$, $\gamma(b_k)\in C$, then we can pass to a subsequence, denoted in the same way, such that $\gamma(b_k)\to p\in C$. By Prop.\ \ref{wxi} $p$ is a future endpoint for $\gamma$, a contradiction with its future inextendibility.
\end{proof}

Note that the term inextendibility refers to the absence of  endpoints. The fact that in presence of endpoints the curve can be extended requires further conditions.

\begin{proposition}
Let $(X,d)$ be a countably-generated Lorentzian prelength space, $X=I(X)$. Every isocausal curve $\gamma: I \to X$, $I=(a,b)$ or $I=[0,b)$, $b<\infty$,  with future endpoint is the restriction of a future inextendible isocausal curve. The case $b=+\infty$ requires a preliminary reparametrization of the curve to accomplish $b<+\infty$. An analogous version holds for the past inextendible and inextendible results.
\end{proposition}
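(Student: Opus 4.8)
The plan is to reparametrize by a time function and to obtain the inextendible extension as a maximal element via Zorn's lemma, the conditions $X=I(X)$ and the prelength property jointly preventing any maximal curve from possessing a future endpoint. Concretely, since $(X,d)$ is countably generated it carries a bounded time function $\tau:X\to[-1,1]$, and I would reparametrize $\gamma$ so that $\tau(\gamma(t))=t$; this replaces the domain by a subinterval of $[-1,1]$ and in particular renders the right endpoint finite, which is exactly the preliminary reparametrization flagged in the statement for the case $b=+\infty$. Writing $t_0=\tau(p)$, I would then close up the curve by adjoining $p$ at parameter $t_0$. To check that the extended map $\bar\gamma$ remains isocausal I would use that $J$ is closed (Theorem \ref{thm:Jproperties}): passing to the limit along $\gamma(t)\to p$ gives $\gamma(s)\le p$ for every interior $s$, and antisymmetry of $J$ upgrades this to $\gamma(s)<p$, since $\gamma(s)=p$ together with a choice $s<s'<t_0$ would force $\gamma(s')\le p\le\gamma(s')$ and hence $\gamma(s')=p$, contradicting $\gamma(s)<\gamma(s')$.

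The basic extension move is then the following. Since $X=I(X)$, the point $p$ has nonempty chronological future, so I may choose $q$ with $p\ll q$, and $\tau(p)<\tau(q)$ because $\tau$ is a time function. The prelength property furnishes an isocausal curve from $p$ to $q$, which I would reparametrize by $\tau$ and concatenate with $\bar\gamma$; transitivity of $J$ and the same antisymmetry argument show the concatenation is again isocausal and strictly longer in the $\tau$-parameter. In other words, any isocausal curve possessing a future endpoint admits a strict isocausal extension.

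To iterate this move I would apply Zorn's lemma to the poset of $\tau$-parametrized isocausal curves extending $\bar\gamma$, ordered by extension. Given a chain, its union is an isotone, $\tau$-parametrized map on a subinterval of $[-1,1]$, and by the corollary following Lemma \ref{lem:curves-from-dense} (an isotone time-parametrized map is automatically continuous, hence an isocausal curve) this union again belongs to the poset and bounds the chain; thus a maximal element $\sigma^\ast$ exists. Were the domain of $\sigma^\ast$ closed, its right endpoint would be a future endpoint and the extension move of the previous paragraph would enlarge it; and were $\sigma^\ast$, with half-open domain, to possess a future endpoint, the closing-up together with the extension move would enlarge it. Both alternatives contradict maximality, so $\sigma^\ast$ has no future endpoint, i.e.\ it is future inextendible, and $\gamma$ is a reparametrized restriction of it; the past and two-sided cases follow symmetrically. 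The step I expect to be the real obstacle is the verification that chains admit upper bounds --- that the limiting isotone map is genuinely a continuous isocausal curve --- and this is precisely where the time-function parametrization together with the continuity corollary do the work, dispensing with the Ascoli--Arzel\`a / limit-curve machinery one would otherwise need.
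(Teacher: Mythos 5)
Your proof is correct and follows essentially the same route as the paper's: reparametrize by a time function, extend past the endpoint using $X=I(X)$ together with the prelength property, and conclude with a Zorn-type argument whose chains have upper bounds because isotone $\tau$-parametrized maps are automatically continuous isocausal curves. You are in fact more explicit than the paper on two points (adjoining the endpoint via closedness and antisymmetry of $J$, and the chain-union verification); the only detail left implicit is the final change of parameter back to the original one, so that $\gamma$ is literally a restriction (not merely a reparametrized restriction) of the inextendible curve, which the paper handles by composing with the extended function $\tilde f$.
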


\begin{proof}
We consider the case $I=[0,b)$, the other being a consequence.
Let $\tau: X\to \mathbb{R}$ be a time function, which exists by the countably generated condition. The function $\tau\circ \gamma$ is continuous and monotone. Note that $\gamma(b_k)\le \gamma(b_{k+1})$ thus $\tau(\gamma(b_k))\le \tau(\gamma(b_{k+1}))$, where the sequence converges to $\tau(p)$ thus $\tau \circ \gamma$ is bounded by $\tau(p)<+\infty$. The increasing continuous function $f=\tau\circ \gamma:[0,b)\to (\tau(\gamma(0)), \tau(p))$ can be extended to an increasing continuous function $\tilde f:[0,+\infty)\to (\tau(\gamma(0)), +\infty)$. Let $\tilde \gamma(t):=\gamma(f^{-1}(t))$ which is the reparametrization with time function $\tau$ of $\gamma$. It is $\tau$-uniform in the sense that $\tau(\tilde\gamma(t))=t$. This isocausal curve can be extended. Let $q \in I^+(p)$ then there is an isocausal curve $\sigma$ connecting $p$ to $q$ which we can parametrize so as to become $\tau$-uniform. Now, $\gamma$ can be extended adjoining it to $\sigma$.

Next, we conclude with a Zorn type argument. Consider the family of all $\tau$-uniform extensions ordered by the inclusion of the images. There is a maximal chain and its union provides a $\tau$-uniform curve $\eta(t)$ which cannot be extended, otherwise the chain would not have been maximal. The curve $\eta(\tilde f(s))$ provides a future inextendible extension to the original curve.
\end{proof}

\begin{definition}
A {\em Cauchy time function} is a time function $\tau$ such that, once composed with a future inextendible isocausal curve $\gamma: [0, \infty) \to X$ gives
a proper  function $\tau \circ \gamma: [0,\infty) \to \mathbb{R}$, and once composed with a past inextendible isocausal curve $\gamma: (-\infty, 0] \to X$ gives
a proper  function $\tau \circ \gamma: (-\infty, 0] \to \mathbb{R}$.
\end{definition}

Equivalently,  in the former future inextendible case $\tau(\gamma(t)) \to +\infty$ for $t\to +\infty$, and in the latter past inextendible case $\tau(\gamma(t))\to -\infty$ for $t \to -\infty$. In particular, if $\gamma$ is inextendible, the image of $\tau \circ \gamma$ is $\mathbb{R}$.


The following proof is interesting already in the smooth case. It avoids use of Geroch's volume time functions \cite{geroch70,hawking73}, hence of any measure on $X$, and relies solely on the continuity of $d$, the compactness of diamonds, and the distinguishing property. In other words, the defining properties of a LMS enter neatly in the proof.

\begin{theorem}
Every countably generated Lorentzian metric  space $(X,d)$,  admits a Cauchy time function $\tau$. Moreover, $\tau$ can be chosen {\em rushing}, namely such that
\[
\tau(x)+d(x,y) \le \tau(y)
\]
 whenever $x\le y$.
\end{theorem}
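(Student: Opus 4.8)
The plan is to imitate Geroch's construction while replacing his spacetime volume by a purely distance-theoretic quantity, so that only the continuity of $d$, the compactness of diamonds, and distinction enter. Fix a countable dense sequence $\{x_n\}\subset X$; by Theorem \ref{prop:cg-Polish} the topology is second countable, so such a sequence exists, and since $X$ is generated the sets $I^\pm(x)$ are non-empty and open, so a dense sequence automatically meets $I^-(x)$ and $I^+(x)$ and is therefore itself generating. With $g(t)=t/(1+t)$ set
\[
v^-(x)=\sum_{n}2^{-n}g\big(d(x_n,x)\big),\qquad v^+(x)=\sum_{n}2^{-n}g\big(d(x,x_n)\big),
\]
and define the candidate $\tau_C=\log v^- -\log v^+$, the analogue of $\log(\mathrm{vol}\,I^-/\mathrm{vol}\,I^+)$. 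Both series converge uniformly (terms bounded by $2^{-n}$), so $v^\pm$ are continuous, and they are strictly positive everywhere because each $x$ has a sequence point in $I^-(x)$ and one in $I^+(x)$; hence $\tau_C$ is a well-defined continuous real function.

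Next I would verify that $\tau_C$ is a strict time function. Monotonicity is immediate from the definition of $J$: if $x\le y$ then $d(z,y)\ge d(z,x)$ and $d(x,z)\ge d(y,z)$ for every $z$, so each summand of $v^-$ is non-decreasing and each summand of $v^+$ non-increasing, giving $\tau_C(x)\le\tau_C(y)$. The delicate point is strictness for $x<y$, including the null case $d(x,y)=0$. Here distinction yields some $z$ with $d(z,y)\ne d(z,x)$ or $d(x,z)\ne d(y,z)$; combined with the one-sided $J$-inequalities this upgrades to a \emph{strict} inequality, and by continuity of $d$ it persists for a dense-sequence point near $z$, so the corresponding summand is strictly monotone and $\tau_C(x)<\tau_C(y)$.

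The core of the argument is the Cauchy property. Let $\gamma:[0,b)\to X$ be future inextendible and $x_0=\gamma(0)$. For fixed $n$, if $\gamma(t_k)\ll x_n$ for some $t_k\to b$, then $\gamma(t_k)\in J^+(x_0)\cap J^-(x_n)$, a closed subset of the compact set $J(\{x_0,x_n\})$ (Theorem \ref{thm:lms-via-emeralds}), hence compact; extracting a convergent subsequence and applying Proposition \ref{wxi} would produce a future endpoint, contradicting inextendibility. Therefore $d(\gamma(t),x_n)=0$ for all large $t$, so every summand of $v^+\!\circ\gamma$ eventually vanishes and $v^+(\gamma(t))\to0$ by dominated convergence, while $v^-(\gamma(t))\ge v^-(x_0)>0$ by monotonicity. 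Thus $\tau_C(\gamma(t))\to+\infty$; the past case is symmetric, and this proves the first assertion.

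For the rushing refinement I would add a correction that lifts $\tau$ at least at the rate of $d$. Using the compact exhaustion $X^m=\overline{I(x_1,\dots,x_m)}$ and the finite constants $\alpha_n=-\sup_{X^n}d(x_n,\cdot)$, set
\[
\rho(x)=\sup\big\{\alpha_n+d(x_n,x):\ x_n\le x\big\},\qquad \tau=\tau_C+\rho .
\]
For $x\le y$ and any $x_n\le x$ the reverse triangle inequality along $x_n\le x\le y$ gives $d(x_n,y)\ge d(x_n,x)+d(x,y)$, whence $\rho(y)\ge\rho(x)+d(x,y)$; since $\tau_C$ is non-decreasing on causal relations, $\tau$ inherits strict monotonicity and the rushing inequality $\tau(x)+d(x,y)\le\tau(y)$, and it stays Cauchy because $\rho$ is bounded below (resp.\ above) along future- (resp.\ past-) inextendible curves. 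I expect the \emph{main obstacle} to be showing that $\rho$ is finite and continuous: finiteness holds because on a neighbourhood contained in some $X^M$ only the finitely many indices $n<M$ can contribute a positive term, all terms with $n\ge M$ being $\le0$ by the choice of $\alpha_n$; but continuity is subtle, since the cut-off $x_n\le x$ is precisely what the rushing inequality requires yet it threatens upper semicontinuity at points on the boundary of some $J^+(x_n)$. This should be handled through the local finiteness of the exhaustion together with the uniform control furnished by the fine quasi-uniformity.
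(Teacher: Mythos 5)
Your construction and proof of the Cauchy time function $\tau_C=\log v^--\log v^+$ is, in substance, identical to the paper's: the paper uses the same two series built from a dense distinguishing sequence, proves strict monotonicity from the definition of $J$ combined with the distinction property exactly as you do, and obtains the Cauchy property from the compactness of $J^+(\gamma(0))\cap J^-(z_k)$ together with non-partial imprisonment --- your inline extraction of a convergent subsequence followed by Proposition~\ref{wxi} is precisely the paper's proof of that imprisonment statement. Up to this point your argument is correct.

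The gap is in the rushing part, and it is genuine. Note first what the paper actually does there: it does \emph{not} construct the correction; it invokes the main result of \cite{minguzzi25c} for the existence of a continuous rushing function and then performs only your final assembly step (adding it to $\tau_C$, which preserves both the time-function and the Cauchy properties). Your explicit candidate
\[
\rho(x)=\sup\{\alpha_n+d(x_n,x):\ x_n\le x\},\qquad \alpha_n=-\sup_{X^n}d(x_n,\cdot),
\]
does satisfy the rushing inequality and is finite, but it is not continuous in general, and the failure is of \emph{lower} semicontinuity, not upper semicontinuity as you suggest: the cut-off $x_n\le x$ is a \emph{closed} condition, so a term \emph{disappears} when $y$ exits $J^+(x_n)$, and on the null portion of $J^+(x_n)$ the reward $d(x_n,\cdot)$ is zero, so the disappearing term has value $\alpha_n$ there and nothing compensates for it. Since you fixed an arbitrary dense sequence, an adversarial enumeration exhibits the jump. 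In two-dimensional Minkowski space take $x_1=(0,0)$, $x_2=(300,0)$, $x_3=(-100,0)$, and let $x_4,x_5,\dots$ enumerate $\mathbb{Q}^2$. Then $\alpha_1=0$ (the hull $X^1$ is degenerate), while for every $m\ge 3$ the hull $X^m$ contains $P=(200,0)$, so for any $x_m\le y$ with $y$ near $x=(1,1)$ one gets $\alpha_m+d(x_m,y)\le -d(x_m,P)+d(x_m,y)\le -d(y,P)\le -198$, and $x_2$ never contributes near $x$. At $x=(1,1)$, which is null-related to $x_1$, the $n=1$ term gives $\rho(x)\ge \alpha_1+d(x_1,x)=0$; but at $y_k=(1-1/k,1)\to x$, which are spacelike to $x_1$, that term is absent and hence $\rho(y_k)\le -198$. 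So $\rho$ jumps by about $200$ at $x$.

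This mechanism is structural, not an artifact of your estimates: replacing the cut-off $\le$ by the open condition $\ll$ restores lower semicontinuity but then terms \emph{appear} discontinuously across null boundaries, destroying upper semicontinuity instead; the rushing inequality forces the reward $d(x_n,\cdot)$ to degenerate to zero exactly where the constraint remains binding. Neither ``local finiteness of the exhaustion'' nor the fine quasi-uniformity addresses this, since even a single term (here $n=1$) produces the jump. This is precisely why the existence of continuous rushing functions is treated as a theorem in its own right in \cite{minguzzi25c} rather than obtained by a direct supremum construction; to complete your proof you would either have to cite that result, as the paper does, or supply a genuinely different construction of $\rho$.
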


We recall that for a countably generated Lorentzian metric space $X=I(X)$.

\begin{proof}
We know that there is a dense countable family $\mathcal{S}=\{z_k\}$ that distinguishes points, in the sense that, if $x\ne y$, there is some $z_k$ such that $d(z_k, x) \ne d(z_k,y)$ or $d(x, z_k)\ne d(y,z_k)$. In particular, by the countably generated property, the density and the openness of the chronological relation, every point admits some element of $\mathcal{S}$ in its chronological past and some element  of $\mathcal{S}$ in its chronological future.
Let us introduce the functions
\[
f(x):= \sum_{k=1} \frac{1}{2^k} \frac{d(z_k, x)}{1+d(z_k, x)}, \qquad g(x):= \sum_{k=1} \frac{1}{2^k} \frac{d(x,z_k)}{1+d(x,z_k)} ,
\]
then $f: X \to (0,1]$ is continuous and isotone and $g: X \to (0,1]$ is continuous and anti-isotone. Note that none of them takes the value zero by the property recalled above on the existence of some element $z_k$ in the past/future of $x$. The function $f/g$ is thus well-defined, continuous and isotone. Moreover, it is a time function, because if $x<y$, there must be some point $z_k$ that distinguishes them. If  $d(z_k, x) \ne d(z_k,y)$ then necessarily $d(z_k,x)<d(z_k, y)$, which implies that $f(x)<f(y)$, $g(x)\ge g(y)$ and so $\frac{f}{g}(x)<\frac{f}{g}(y)$. Similarly, if  $d(x, z_k)\ne d(y,z_k)$ then necessarily $d(x,z_k)> d(y,z_k)$, which implies that $g(x)>g(y)$, $f(x)\le f(y)$ and so $\frac{f}{g}(x)<\frac{f}{g}(y)$.
Let us prove that $\tau=\log (f/g)$ is a Cauchy time function. This would follow by proving that, for any inextendible isocausal curve $\gamma: \mathbb{R}\to X$,  $f(\gamma(s))\to 0$ for $s\to -\infty$ and $g(\gamma(s))\to 0$ for $s\to +\infty$. We prove the latter property, the former being analogous.  We just need to show that for every $k$, the future inextendible curve is bound to definitively  escape $I^-(z_k)$, so that for sufficiently large $s$, $d(\gamma(s),z_k)=0$ and the $k$-th contribution to $g$ vanishes. This follows from the compactness of $J^+(\gamma(0))\cap J^-(z_k)$ and the fact that non-partial imprisonment holds so that $\gamma(s) \in J^+(\gamma(0))$ does not belong to $J^-(z_k)$ for sufficiently large $s$.

Observe that the sum of a Cauchy time function and a continuous rushing function provides a rushing Cauchy time function, thus we need only to prove existence of a continuous rushing  function. Rushing time functions (hence continuous) exist by the main result of \cite{minguzzi25c} which,  as observed there, applies to LMS (see Ex.\ 3.13 and  Thm 5.8 of that reference).
\end{proof}

\section*{Acknowledgments }I  take the opportunity to thank  A. Bykov and S. Suhr for the fruitful collaboration. This review was presented at the  BIRS-IMAG meeting in Granada.
I am  grateful to L. García-Heveling for posing the question of whether Cauchy time functions exist on Lorentzian metric spaces, which motivated the original part of this paper. I also thank the organizers of the  meeting—A. Shadi Tahvildar-Zadeh, A. Burtscher, J. L. Flores, and L. Mehidi—for hosting a stimulating and productive conference.

Finally, I thank M. Braun and V. Dardel for a question on a sloppy statement about the continuity of $\tilde d$ in the opening paragraphs of \cite[Sec.\ 1.3]{minguzzi22}  and  A. Bykov and S. Suhr for help in clarifying the matter (see discussion after Thm.\ \ref{tjg}).

This study was funded by the European Union - NextGenerationEU, in the framework of the PRIN Project (title) {\em Contemporary perspectives on geometry and gravity} (code 2022JJ8KER - CUP B53D23009340006). The views and opinions expressed in this article are solely those of the authors and do not necessarily reflect those of the European Union, nor
can the European Union be held responsible for them.

%

\end{document}